\documentclass[10pt,aps,prl,twocolumn]{revtex4-2}

\usepackage{mathrsfs}
\usepackage{bbm}
\usepackage{amsmath}
\usepackage{amssymb}
\usepackage{graphicx}
\usepackage{mathtools}
\usepackage{amsthm}

\makeatletter

\usepackage{hyperref}
\usepackage{color}
\definecolor{myblue}{RGB}{0,50,200}
\hypersetup{
colorlinks,
citecolor=myblue,
linkcolor=myblue,
urlcolor=myblue
}
\allowdisplaybreaks

\newcommand{\bms}{\mathbbmss}
\newcommand{\mca}{\mathcal}
\newcommand{\mbb}{\mathbb}

\newcommand{\avg}[1]{\langle #1\rangle}

\newcommand{\bra}[1]{\left( #1 \right)}

\newcommand{\pp}{\partial}

\newcommand{\Tr}[1]{{\rm tr}\left\{#1\right\}}
\newcommand{\Br}[1]{\langle #1|}
\newcommand{\Kt}[1]{|#1\rangle}
\newcommand{\Mat}[1]{|#1\rangle\langle #1|}
\newcommand{\dMat}[2]{|#1\rangle\langle #2|}

\newcommand{\avgObs}[2]{\langle #1|#2|#1\rangle}
\newcommand{\davgObs}[3]{\langle #1|#2|#3\rangle}

\begin{document}
\title{Lower Bound on Irreversibility in Thermal Relaxation of Open Quantum Systems}

\author{Tan Van Vu}
\email{tan@biom.t.u-tokyo.ac.jp}
\affiliation{Department of Information and Communication Engineering, Graduate
School of Information Science and Technology, The University of Tokyo,
Tokyo 113-8656, Japan}

\author{Yoshihiko Hasegawa}
\email{hasegawa@biom.t.u-tokyo.ac.jp}
\affiliation{Department of Information and Communication Engineering, Graduate
School of Information Science and Technology, The University of Tokyo,
Tokyo 113-8656, Japan}

\date{\today}

\begin{abstract}
We consider the thermal relaxation process of a quantum system attached to single or multiple reservoirs.
Quantifying the degree of irreversibility by entropy production, we prove that the irreversibility of the thermal relaxation is lower-bounded by a relative entropy between the unitarily-evolved state and the final state.
The bound characterizes the state discrepancy induced by the non-unitary dynamics, and thus reflects the dissipative nature of irreversibility.
Intriguingly, the bound can be evaluated solely in terms of the initial and final states and the system Hamiltonian, thereby providing a feasible way to estimate entropy production without prior knowledge of the underlying coupling structure.
This finding refines the second law of thermodynamics and reveals a universal feature of thermal relaxation processes.
\end{abstract}

\pacs{}
\maketitle

\emph{Introduction.}---The last two decades have witnessed substantial progress in the thermodynamics of nonequilibrium systems subject to significant fluctuations.
Various properties of small systems have been elucidated with the advent of comprehensive frameworks such as stochastic thermodynamics \cite{Sekimoto.2010,Seifert.2012.RPP} and quantum thermodynamics \cite{Vinjanampathy.2016.CP,Goold.2016.JPA,Deffner.2019}.
One of the prominent universal relations is the celebrated fluctuation theorem \cite{Evans.1993.PRL,Gallavotti.1995.PRL,Crooks.1999.PRE,Jarzynski.2000.JSP,Esposito.2009.RMP,Campisi.2011.RMP}, from which the second law of thermodynamics and the fluctuation-dissipation theorem can be immediately derived \cite{Gallavotti.1996.PRL,Andrieux.2007.JSM,Saito.2008.PRB}.
Beyond the fluctuation theorem, much recent attention has been focused on thermodynamic uncertainty relations \cite{Barato.2015.PRL,Gingrich.2016.PRL,Horowitz.2017.PRE,Proesmans.2017.EPL,Brandner.2018.PRL,Hasegawa.2019.PRE,Vu.2019.PRE.UnderdampedTUR,Hasegawa.2019.PRL,Timpanaro.2019.PRL,Guarnieri.2019.PRR,Carollo.2019.PRL,Vo.2020.PRE,Dechant.2020.PNAS,Hasegawa.2020.PRL,Vu.2020.PRR,Koyuk.2020.PRL,Miller.2021.PRL.TUR,Hasegawa.2021.PRL,Sacchi.2021.PRE,Horowitz.2020.NP}, speed limits \cite{Mandelstam.1945.JP,Margolus.1998.PD,Campo.2013.PRL,Pires.2016.PRX,Deffner.2017.JPA,Okuyama.2018.PRL,Shiraishi.2018.PRL,Shanahan.2018.PRL,Funo.2019.NJP}, and refinements of the second law \cite{GomezMarin.2008.PRE,Vaikuntanathan.2009.EPL,Deffner.2010.PRL,Aurell.2011.PRL,Alhambra.2017.PRA,Mancino.2018.PRL,Dechant.2019.arxiv,Shiraishi.2019.PRL,Abiuso.2020.E,Buscemi.2020.PRA,Campisi.2021.PRE,Vu.2021.PRL}.
These findings are not only theoretically important but also provide powerful tools for thermodynamic inference, for instance, in the estimation of free energy \cite{Gore.2003.PNAS} and dissipation \cite{Li.2019.NC,Manikandan.2020.PRL,Vu.2020.PRE,Otsubo.2020.PRE}.

Central to most established relations is thermodynamic irreversibility, which is quantified by irreversible entropy production.
The positivity of entropy production is universally captured by the second law of thermodynamics, which imposes fundamental limits on the computational cost via Landauer's principle \cite{Landauer.1961.JRD,Reeb.2014.NJP,Goold.2015.PRL,Timpanaro.2020.PRL} and on the performance of physical and biological systems such as heat engines \cite{Shiraishi.2016.PRL,Pietzonka.2018.PRL} and molecular motors \cite{Pietzonka.2016.JSM}.
The importance of entropy production has triggered intense research to formulate and investigate its properties \cite{Saito.2016.EPL,Neri.2017.PRX,Pigolotti.2017.PRL,Manzano.2019.PRL,Falasco.2020.PRL}, across from classical to quantum (see Ref.~\cite{Landi.2021.RMP} for a review).
Nonetheless, with restriction to a specific class of nonequilibrium processes, rich features of thermodynamic irreversibility may be found.
One of the interesting classes is thermal relaxation, which is ubiquitous in nature and plays crucial roles in condensed matter \cite{Dattagupta.2012}, heat engines \cite{Benenti.2017.PR}, and quantum state preparation.
Any system coupled to thermal reservoirs unavoidably exchanges energy with the surrounding environment and relaxes to a stationary state.
Of note, Ref.~\cite{Shiraishi.2019.PRL} proved that the entropy production during relaxation of classical Markovian processes is bounded from below by the classical relative entropy between the initial and final distributions.
For relaxation to equilibrium in open quantum systems, it has been shown that the entropy production is lower-bounded by a lag between states in terms of a time-reversed map \cite{Alhambra.2017.PRA} and a geometrical distance on the Riemannian manifold \cite{Vu.2021.PRL}.

In this Letter, we deepen our understanding of thermodynamic irreversibility in thermal relaxation processes of open quantum systems.
Specifically, we derive a fundamental bound on irreversibility for systems that are in contact with thermal reservoirs and described by the Lindblad master equation.
We prove that the irreversible entropy production during relaxation is lower-bounded by a relative entropy between the final states obtained with the unitary dynamics and the original dynamics [cf. Eq.~\eqref{eq:main.result.1}].
Lindblad dynamics comprise unitary and non-unitary parts; therefore, the lower bound quantifies the state discrepancy induced by the dissipative non-unitary dynamics, which intuitively reflects the nature of thermodynamic irreversibility (shown in Fig.~\ref{fig:illustration}).
Remarkably, the derived bound is saturable, experimentally accessible, and stronger than the conventional second law of thermodynamics.
Furthermore, the bound provides a feasible way to estimate irreversible entropy production with the help of the quantum state tomography technique.
We show that a tighter Landauer bound can be obtained as a corollary and demonstrate our result in an autonomous thermal machine.

\emph{Main result.}---We consider the thermal relaxation process of a Markovian open quantum system.
The system can be simultaneously coupled to multiple thermal reservoirs at different temperatures.
The dynamics of the density matrix $\rho(t)$ are governed by the Lindblad master equation \cite{Lindblad.1976.CMP}:
\begin{equation}
\pp_t\rho(t)=\mca{L}[\rho(t)]\coloneqq -i[H,\rho(t)]+\sum_k\mca{D}_k[\rho(t)],\label{eq:org.Lindblad.eq}
\end{equation}
where $H$ is the time-independent Hamiltonian and the dissipator is given by $\mca{D}_k[\circ]\coloneqq L_k(\circ)L_k^\dagger-\frac{1}{2}\{L_k^\dagger L_k,(\circ)\}$.
The jump operators come in pairs $(L_k,L_{k'})$ with energy changes $(\omega_k,\omega_{k'})$, which satisfy $[L_k,H]=\omega_kL_k$ and $\omega_k=-\omega_{k'}$.
This condition implies that the jump operators account for transitions between different energy eigenbasis with the same energy change \cite{Manzano.2018.PRX,Funo.2019.NJP}. 
Note that $[\circ,\star]$ and $\{\circ,\star\}$ are, respectively, the commutator and the anticommutator of two operators, and the Planck constant and the Boltzmann constant are both set to unity throughout this Letter, $\hbar=k_{\rm B}=1$.
We assume the local detailed balance condition $L_k=e^{\Delta s_{\rm env}^k/2}L_{k'}^\dagger$, which is fulfilled in most cases of physical interest \cite{Manzano.2019.PRL}.
Here $\Delta s_{\rm env}^{k}$ is the change in the environment entropy due to a jump of type $k$.
After a sufficiently long time, the system reaches a stationary state that may no longer be a Gibbs state when multiple reservoirs are attached.
The degree of irreversibility of the relaxation process during time $\tau$ can be quantified by the irreversible entropy production $\Sigma_\tau$, defined as
\begin{equation}
\Sigma_\tau\coloneqq\Delta S_{\rm sys}+\Delta S_{\rm env},\label{eq:irr.ent.prod}
\end{equation}
where $\Delta S_{\rm sys}\coloneqq\Tr{\rho(0)\ln\rho(0)}-\Tr{\rho(\tau)\ln\rho(\tau)}$ is the change in the system entropy (characterized by the von Neumann entropy) and $\Delta S_{\rm env}$ corresponds to the entropy change of the environment, given by \cite{Manzano.2018.PRX}
\begin{equation}
\Delta S_{\rm env}\coloneqq\int_0^\tau\sum_k\Tr{L_k\rho(t)L_k^\dagger}\Delta s_{\rm env}^kdt.
\end{equation}
Within this definition, it can be proved that the entropy production is always nonnegative, $\Sigma_\tau\ge 0$ \cite{Trushechkin.2019.JMS}.

\begin{figure}[t]
\centering
\includegraphics[width=1.0\linewidth]{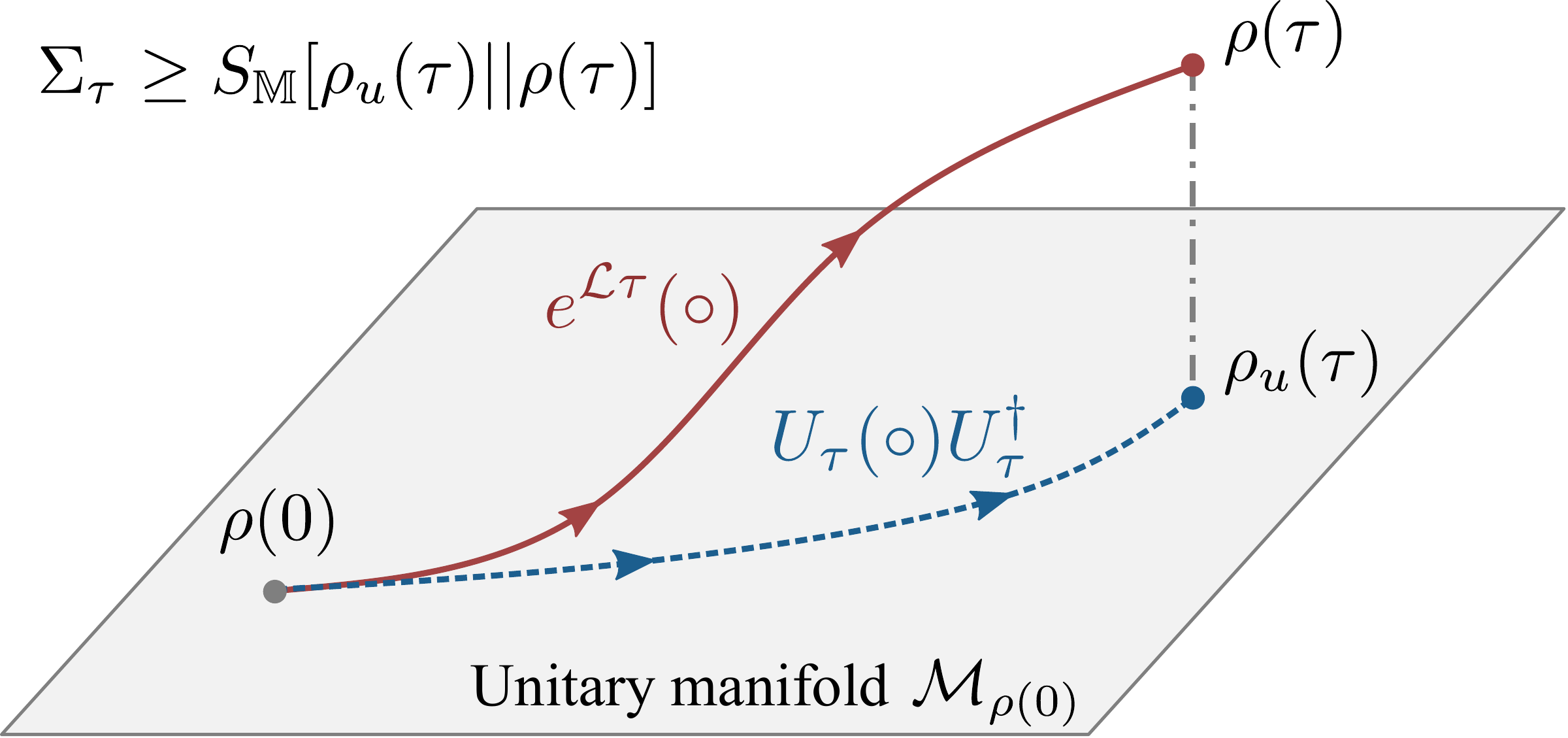}
\protect\caption{Geometrical illustration of the main result. The manifold of density matrices that can be generated from $\rho(0)$ via nondissipative unitary transforms is denoted by $\mca{M}_{\rho(0)}$. The time evolution of the density matrix under the Lindblad dynamics and the unitary dynamics is described by the solid line and the dashed line, respectively. The irreversible entropy production $\Sigma_\tau$ is bounded from below by the information-theoretical distance $S_{\mbb{M}}[\rho_u(\tau)||\rho(\tau)]$ --- a relative entropy between the unitarily-evolved state $\rho_u(\tau)\coloneqq U_\tau\rho(0)U_\tau^\dagger$ and the final state $\rho(\tau)$.}\label{fig:illustration}
\end{figure}

We explain our main result under the given setup.
The proof is provided at the end of the Letter.
Our main result is a lower bound on $\Sigma_\tau$ in terms of a relative entropy between the initial and final states,
\begin{equation}
\Sigma_\tau\ge S_{\mbb{M}}[U_\tau\rho(0)U_\tau^\dagger||\rho(\tau)],\label{eq:main.result.1}
\end{equation}
where $U_t\coloneqq e^{-iHt}$ is the unitary operator and $S_{\mbb{M}}(\rho||\sigma)\coloneqq S(\rho||\sum_n\Pi_n\sigma\Pi_n)$ is the projectively measured relative entropy between $\rho$ and $\sigma$ with the eigenbasis $\{\Pi_n\}$ of $\rho$.
Here, $S(\rho||\sigma)\coloneqq \Tr{\rho(\ln\rho-\ln\sigma)}\ge 0$ is the quantum relative entropy between states $\rho$ and $\sigma$.
The inequality \eqref{eq:main.result.1} indicates that the entropy production is bounded from below by an information-theoretical distance between the initial and final states, which strengthens the Clausius inequality in the conventional second law of thermodynamics for thermal relaxation processes.

Some remarks regarding the main result are in order.
First, the bound is geometrically and intuitively understandable.
The system state is governed by Lindblad dynamics, which consist of a non-dissipative unitary part and a dissipative non-unitary part.
The lower bound is the distance between the unitarily-evolved state and the final state; therefore, it quantifies how far the system is driven by non-unitary dynamics, and thereby intuitively reflects the nature of entropy production, namely a dissipative term.
When the system is uncoupled to the reservoirs and governed by unitary dynamics, both the entropy production and relative entropy vanish, and the derived relation becomes a trivial equality.
Second, the bound is tight and can be saturated, for example, in the long-time regime; consequently, it can be applied to thermodynamic inference.
In particular, given the initial and final states and the system Hamiltonian, a lower bound of entropy production can be estimated without requisite prior knowledge of the underlying dynamics.
Third, the bound can be interpreted as a quantum speed limit, $\tau\ge S_{\mbb{M}}[U_\tau\rho(0)U_\tau^\dagger||\rho(\tau)]/\overline{\Sigma}$, where $\overline{\Sigma}\coloneqq\Sigma_\tau/\tau$ is the average entropy production rate.
An important implication of this speed limit is that a fast state-transformation requires a high dissipation rate \cite{Shiraishi.2018.PRL,Vu.2021.PRL}.
Last, in the classical limit (e.g., when the initial density matrix has no coherence in the energy eigenbasis of the Hamiltonian), the lower bound reduces exactly to the classical relative entropy between the initial and final distributions; therefore, our result generalizes the classical bound reported in Ref.~\cite{Shiraishi.2019.PRL} to the case of multiple reservoirs.
Note also that the relation \eqref{eq:main.result.1} is valid even for systems with broken time-reversal symmetry, such as electronic systems with a Peierls phase.

A comparison to the existing bounds on entropy production is provided. Some strong bounds were derived for finite-dimension \cite{Reeb.2014.NJP} and zero-temperature \cite{Timpanaro.2020.PRL} environments. Nonetheless, these bounds reduce to the conventional second law for either infinitely large or high-temperature environments.
Another information-theoretical bound \cite{Alhambra.2017.PRA} was derived for thermal relaxation processes; however, this bound is only applicable to the single-reservoir case.
Our bound thus plays an important role in the regimes in which the existing bounds either become trivial or are inapplicable.

\begin{figure*}[!]
\centering
\includegraphics[width=1.0\linewidth]{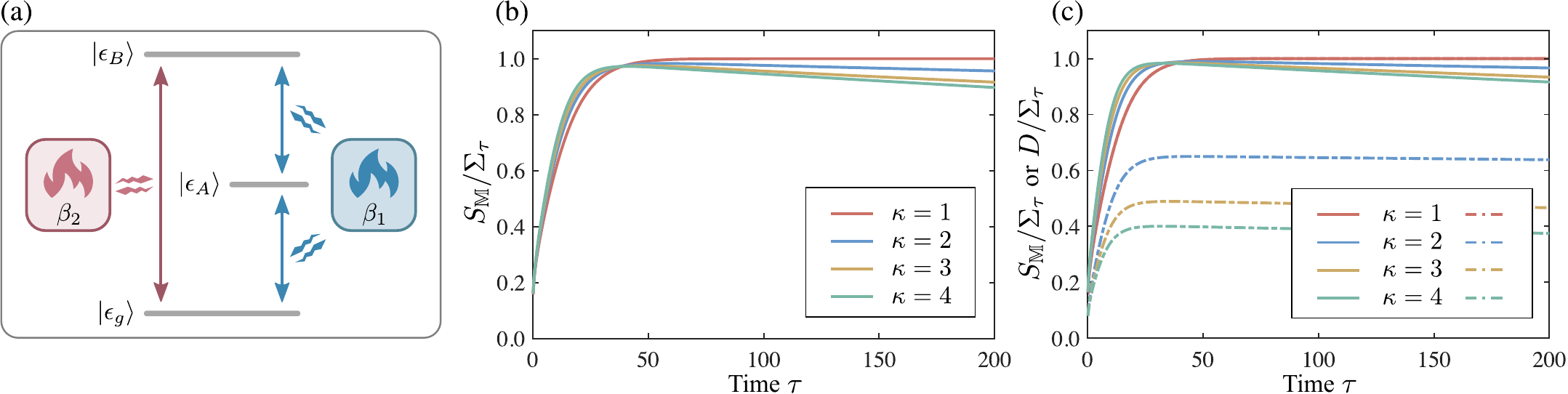}
\protect\caption{Numerical illustration of the main result. (a) Schematic diagram of the two-reservoir machine. (b) The ratio $S_{\mbb{M}}[U_\tau\rho(0)U_\tau^\dagger||\rho(\tau)]/\Sigma_\tau$ is plotted as a function of time $\tau$, and each solid line depicts the result obtained with $\kappa$ ranged from $1$ to $4$. Parameters are fixed as $\omega_1=\omega_3=0.2$, $\gamma=0.01$, and $\beta_1=1$. (c) The ratios $S_{\mbb{M}}[U_\tau\rho(0)U_\tau^\dagger||\rho(\tau)]/\Sigma_\tau$ (solid line) and $D[p_n(0)||p_n(\tau)]/\Sigma_\tau$ (dash-dotted line) are plotted with $\kappa$ ranged from $1$ to $4$. Parameters are fixed as $\omega_1=0.2$, $\omega_3=0.1$, $\gamma=0.01$, and $\beta_1=1$.}\label{fig:num.result.thermal.machine}
\end{figure*}

As coupled to a single thermal reservoir at the inverse temperature $\beta$, the system relaxes toward an equilibrium state $\pi\coloneqq e^{-\beta H}/Z$, irrespective of the initial state.
In this case, a Hamiltonian-free lower bound on the entropy production can be obtained \cite{Supp.PhysRev},
\begin{equation}
\Sigma_\tau\ge S_{\mbb{E}}[\rho(0)||\rho(\tau)],\label{eq:main.result.2}
\end{equation}
where $S_{\mbb{E}}(\rho||\sigma)\coloneqq \sum_n a_n\ln(a_n/b_n)\eqqcolon D(a_n||b_n)$ is exactly the classical relative entropy between distributions $\{a_n\}$ and $\{b_n\}$, which are the increasing eigenvalues of $\rho$ and $\sigma$, respectively.
Any unitary transform does not change the magnitude of the eigenvalues of a density matrix but only the eigenbasis.
Therefore, the term $S_{\mbb{E}}$ quantitatively characterizes the state change caused by the non-unitary dynamics.
The lower bound is now dependent only on the initial and final states, which is an inherent feature of thermal relaxation processes.
For generic time-driven systems, it can be proved that there does not exist such a universal metric that bounds irreversible entropy production from below \cite{Supp.PhysRev}.
However, for a specific class of quantum systems, a metric based on the thermodynamic length can be found \cite{Abiuso.2020.E}.

\emph{Tightening Landauer's principle.}---Information and thermodynamics can be intuitively related via Landauer's principle \cite{Landauer.1961.JRD}, which quantifies the minimal heat dissipation associated with erasure of information in a memory.
The inequality in Eq.~\eqref{eq:main.result.1} derives a tighter bound than the conventional Landauer bound $\Delta S_{\rm env}\ge -\Delta S_{\rm sys}$.
Consider the scenario in which a qubit is reset from a maximally mixed state $\mbb{I}/2$ to the ground state $\Mat{0}$ by quenching the Hamiltonian and letting the system relax to the stationary state.
It should be stressed that the final state cannot be exactly $\Mat{0}$ in a finite time.
Let $\delta\in(0,1)$ be the probability that the qubit is in the ground state at time $\tau$, i.e., $\delta=\avgObs{0}{\rho(\tau)}$.
According to Eq.~\eqref{eq:main.result.1}, the minimal heat dissipation is related to the probability of success as
\begin{equation}
\Delta S_{\rm env}\ge -\Delta S_{\rm sys}-\frac{1}{2}\ln[4\delta(1-\delta)].\label{eq:tight.Land.prin}
\end{equation}
Equation \eqref{eq:tight.Land.prin} implies that the higher precision of a memory is, the more heat is dissipated.
As $\delta\to 1$, the term in the right-hand side of Eq.~\eqref{eq:tight.Land.prin} goes to infinity.

\emph{Examples.}---To demonstrate the main result [Eq.~\eqref{eq:main.result.1}], we consider an autonomous thermal machine \cite{Manzano.2018.PRX} with three levels $\{\Kt{\epsilon_g},\Kt{\epsilon_A},\Kt{\epsilon_B}\}$.
Such machines can operate as refrigerators and are also the building blocks for quantum clocks \cite{Erker.2017.PRX,Mitchison.2019.CP}.
The Hamiltonian of the system is $H=\omega_1\Mat{\epsilon_A}+\omega_2\Mat{\epsilon_B}$, where $\omega_1$, $\omega_2$, and $\omega_3\coloneqq\omega_2-\omega_1$ are frequency gaps between $\Kt{\epsilon_g}\leftrightarrow\Kt{\epsilon_A}$, $\Kt{\epsilon_g}\leftrightarrow\Kt{\epsilon_B}$, and $\Kt{\epsilon_A}\leftrightarrow\Kt{\epsilon_B}$, respectively.
The machine is powered by two reservoirs at different inverse temperatures $\beta_1\ge\beta_2$, which mediate transitions between the energy levels [see Fig.~\ref{fig:num.result.thermal.machine}(a)].

First, we examine the case in which $\omega_1=\omega_3$ and the Lindblad equation has four jump operators, $L_1=\sqrt{\eta_1}(\dMat{\epsilon_A}{\epsilon_g}+\dMat{\epsilon_B}{\epsilon_A})$, $L_{1'}=\sqrt{\eta_{1'}}(\dMat{\epsilon_g}{\epsilon_A}+\dMat{\epsilon_A}{\epsilon_B})$, $L_{2}=\sqrt{\eta_2}\dMat{\epsilon_B}{\epsilon_g}$, and $L_{2'}=\sqrt{\eta_{2'}}\dMat{\epsilon_g}{\epsilon_B}$, where $\eta_{1}=\gamma n_1^{\rm th}(\omega_1)$, $\eta_{1'}=\gamma[n_1^{\rm th}(\omega_1)+1]$, $\eta_{2}=\gamma n_2^{\rm th}(\omega_2)$, and $\eta_{2'}=\gamma[n_2^{\rm th}(\omega_2)+1]$.
Here, $n_r^{\rm th}(\omega)\coloneqq(e^{\beta_r\omega}-1)^{-1}$ denotes the Planck distribution and $\gamma$ is the decay rate.
Let $\pi^{\rm ss}$ be the stationary state of the system.
We set $\beta_1=\kappa\beta_2$ with $\kappa\ge 1$, and the initial density matrix is a pure state $\rho(0)=\Mat{\varphi}$, where $\Kt{\varphi}=\sqrt{10-\kappa}\Kt{\epsilon_g}/3+\sqrt{\kappa-1}\Kt{\epsilon_B}/3$.
The magnitude of $\kappa$ characterizes the coherence in the initial state and the nonequilibrium degree of $\pi^{\rm ss}$.
When $\kappa=1$, $\rho(0)$ is diagonal in the energy levels, and the system relaxes to the equilibrium Gibbs state.
With $\kappa>1$, coherence emerges in the initial state, and $\pi^{\rm ss}$ becomes a nonequilibrium steady state.
$\kappa$ is varied from $1$ to $4$ and the ratio $S_{\mbb{M}}/\Sigma_\tau$ is plotted as a function of time $\tau$ in Fig.~\ref{fig:num.result.thermal.machine}(b).
As can be seen, the bound is tight and can be saturated for a long time.
Note that the entropy production rate is always positive for $\kappa >1$; therefore, $S_{\mbb{M}}/\Sigma_{\tau}$ goes to zero in the long-time limit.
However, the relation \eqref{eq:main.result.1} is useful because it shows a meaningful bound for initial rapid relaxation processes.

Next, to compare our result with a classical bound, we consider the case that each jump operator characterizes a single jump between two energy levels, $L_{mn}=\sqrt{\eta_{mn}}\Kt{\epsilon_m}\Br{\epsilon_n}~(m\neq n)$.
The transition rates are $\eta_{Ag}=\gamma n_1^{\rm th}(\omega_1)$, $\eta_{gA}=\gamma[n_1^{\rm th}(\omega_1)+1]$, $\eta_{BA}=\gamma n_1^{\rm th}(\omega_3)$, $\eta_{AB}=\gamma[n_1^{\rm th}(\omega_3)+1]$, $\eta_{Bg}=\gamma n_2^{\rm th}(\omega_2)$, and $\eta_{gB}=\gamma[n_2^{\rm th}(\omega_2)+1]$.
In this case, the time evolution of the diagonal terms $p_n(t)\coloneqq\avgObs{\epsilon_n}{\rho(t)}$ follows a classical master equation with time-independent transition rates \cite{Breuer.2002}.
Thereby, Eq.~\eqref{eq:main.result.1} is applied to the diagonal dynamics, which gives $\Sigma_\tau^{\rm cl}\ge D[p_n(0)||p_n(\tau)]$, where $\Sigma_{\tau}^{\rm cl}$ is the entropy production associated with the classical master equation.
In the long-time regime (i.e., when coherence in $\rho(\tau)$ vanishes), it can be proved that $\Sigma_\tau\ge\Sigma_\tau^{\rm cl}$; consequently, $\Sigma_\tau\ge D[p_n(0)||p_n(\tau)]$, which is referred to as the classical bound.
The temperatures and the initial state are the same as in the previous case.
Analogously, $\kappa$ is varied from $1$ to $4$ and the ratios $S_{\mbb{M}}/\Sigma_\tau$ and $D/\Sigma_\tau$ are plotted as functions of time $\tau$ in Fig.~\ref{fig:num.result.thermal.machine}(c).
Two bounds coincide when $\kappa=1$ because there is no coherence in $\rho(t)$ for all $t$.
However, as $\kappa$ increases, the bound $S_{\mbb{M}}$ is tighter than the classical bound.
This is because our bound captures the coherence contribution in the initial state, whereas the classical bound does not.

\emph{Proof of Eq.~\eqref{eq:main.result.1}.}---We first rewrite the dynamics of the density matrix $\rho(t)$ in the interaction picture.
Define $\rho_{I}(t)\coloneqq U_t^\dagger\rho(t)U_t$, the time evolution of $\rho_{I}(t)$ obeys the equation \cite{Supp.PhysRev}
\begin{equation}
\pp_t\rho_{I}(t)=\sum_k\mca{D}_k[\rho_{I}(t)],\label{eq:Int.Lindblad.eq}
\end{equation}
with the initial condition $\rho_{I}(0)=\rho(0)$.
Our approach is based on unraveling the dynamics described by Eq.~\eqref{eq:Int.Lindblad.eq} in terms of quantum trajectories.
In what follows, we demonstrate that the irreversible entropy production $\Sigma_\tau$ can be mathematically linked to the level of individual trajectories.

The framework of quantum trajectories \cite{Horowitz.2012.PRE,Horowitz.2013.NJP} was originally developed in the field of quantum optics as a means to numerically simulate open quantum systems \cite{Breuer.2002}.
Within this approach, the master equation is unraveled into stochastic time evolutions of the pure state of the system $\Kt{\psi(t)}$, conditioned on measurement outcomes obtained from continuous monitoring of the environment.
Each individual trajectory of a stochastic realization can be described by a smooth evolution with discontinuous changes caused by quantum jumps in the state at random times.
A quantum jump is associated with the detection of an event in the environment (e.g., emission or absorption of photons).
The time evolution of the pure state can be described by the stochastic Schr{\"o}dinger equation \cite{Breuer.2002}:
\begin{equation}\label{eq:stoch.Schrodinger.eq}
\begin{aligned}[b]
d\Kt{\psi(t)}&=\mca{S}[\Kt{\psi(t)}]dt\\
&\hspace{0.35cm}+\sum_k\bra{\frac{L_k\Kt{\psi(t)}}{\sqrt{\avg{L_k^\dagger L_k}_{\psi(t)}}}-\Kt{\psi(t)}}dN_{k}(t),
\end{aligned}
\end{equation}
where $\mca{S}(\Kt{\psi})\coloneqq(1/2)\sum_k(\avg{L_k^\dagger L_k}_{\psi}-L_k^\dagger L_k)\Kt{\psi}$ and $\avg{A}_\psi\coloneqq\avgObs{\psi}{A}$.
The stochastic increment $dN_k(t)$ is either $0$ or $1$ (when a jump of type $k$ is detected), and its ensemble average at time $t$ is $\bms{E}[dN_{k}(t)]=\avg{L_k^\dagger L_k}_{\psi(t)}dt$.
Under appropriate initial conditions, the average of $\Mat{\psi(t)}$ over all possible trajectories reduces exactly to the density matrix in the interaction picture, $\bms{E}[\Mat{\psi(t)}]=\rho_{I}(t)$.

Now we define the stochastic entropy production on a single trajectory.
We employ a two-point measurement scheme on the system, where projective measurements are performed at the beginning and at the end of any single trajectory \cite{Manzano.2019.PRL}.
Let $\rho_{I}(0)=\sum_np_n\Mat{n}$ and $\rho_{I}(\tau)=\sum_mp_m\Mat{m}$ be the spectral decompositions of $\rho_{I}(0)$ and $\rho_{I}(\tau)$, the forward process is operated as follows.
The state $\Kt{\psi(0)}$ is sampled from the ensemble $\{\Kt{n}\}$ with probabilities $\{p_n\}$.
The selected state is confirmed by the first measurement in the $\{\Kt{n}\}$ eigenbasis at time $t=0$.
The pure state then evolves in time according to Eq.~\eqref{eq:stoch.Schrodinger.eq}, and the second measurement in the $\{\Kt{m}\}$ eigenbasis is executed at time $t=\tau$.
This procedure results in a stochastic trajectory $\Gamma=\{n,(k_1,t_1),\dots,(k_J,t_J),m\}$, where $n$ and $m$ are measurement outcomes of the first and second measurements, respectively, and $(k_j,t_j)$ denotes a jump of type $k_j$ that occurs at time $t_j$ ($j=1,\dots,J$ and $0\le t_1\le\dots\le t_J\le\tau$).
To define the time-reversed (backward) process, the antiunitary time-reversal operator $\Theta$ is introduced, which satisfies $\Theta i=-i\Theta$ and $\Theta\Theta^\dagger=\Theta^\dagger\Theta=\mbb{I}$.
This operator changes the sign of odd variables under time reversal, such as angular momentum or magnetic fields \cite{Campisi.2011.RMP}.
In the backward process, the initial state $\Kt{\tilde{\psi}(0)}$ is sampled from the ensemble $\{\Kt{\tilde{m}}=\Theta\Kt{m}\}$ with probabilities $\{p_m\}$ and is verified with the projective measurement in the $\{\Kt{\tilde{m}}\}$ eigenbasis.
The pure state $\Kt{\tilde{\psi}(t)}$ analogously obeys Eq.~\eqref{eq:stoch.Schrodinger.eq}, in which the jump operators are replaced by the time-reversed counterparts \cite{Manzano.2018.PRX}
\begin{equation}
\tilde{L}_k=e^{-\Delta s_{\rm env}^k/2}\Theta L_k^\dagger\Theta^\dagger=\Theta L_{k'}\Theta^\dagger.\label{eq:back.jump.operators}
\end{equation}
At time $t=\tau$, the second projective measurement in the $\{\Kt{\tilde{n}}=\Theta\Kt{n}\}$ eigenbasis is performed.
Let $\tilde{\Gamma}=\{m,(k_J,\tau-t_J),\dots,(k_1,\tau-t_1),n\}$ be the time-reversed trajectory that corresponds to $\Gamma$, and the stochastic entropy production associated with the trajectory $\Gamma$ is defined as
\begin{equation}
\Delta s_{\rm tot}[\Gamma]\coloneqq\ln\frac{P(\Gamma)}{\tilde{P}(\tilde{\Gamma})}=\ln\frac{p_n}{p_m}+\sum_{j=1}^J\Delta s_{\rm env}^{k_j},\label{eq:stoch.ent.prod}
\end{equation}
where $P(\Gamma)$ and $\tilde{P}(\tilde{\Gamma})$ are the probabilities of observing the trajectories $\Gamma$ and $\tilde{\Gamma}$, respectively \cite{fnt3}.
Notably, we can prove that the average of $\Delta s_{\rm tot}$ is exactly the irreversible entropy production of the original dynamics [Eq.~\eqref{eq:org.Lindblad.eq}] \cite{Supp.PhysRev},
\begin{equation}
\Sigma_\tau=\avg{\Delta s_{\rm tot}[\Gamma]}=D[P(\Gamma)||\tilde{P}(\tilde{\Gamma})].\label{eq:ent.prod.rel.form}
\end{equation}
Note that the classical relative entropy monotonically decreases under information processing.
Applying the coarse-graining operation $\Lambda:~\Gamma\mapsto n$, which leaves only the first measurement outcome, to $D[P(\Gamma)||\tilde{P}(\tilde{\Gamma})]$, we obtain a lower bound on the entropy production from Eq.~\eqref{eq:ent.prod.rel.form} as
\begin{equation}
\Sigma_\tau\ge D(p_n||\tilde{p}_n),\label{eq:ent.prod.bound.1}
\end{equation}
where $\tilde{p}_n$ is the probability to observe the measurement outcome $\Kt{\tilde{n}}$ at the end of the backward process \cite{fnt1}.
It is crucial to note from Eq.~\eqref{eq:back.jump.operators} that the (inverted) jump operators in the backward process are identical with those in the forward process.
As a consequence, the density matrix in the backward process immediately before the second projective measurement is performed can be explicitly expressed as $\tilde{\rho}_{I}(\tau)=\Theta\rho_{I}(2\tau)\Theta^\dagger$.
The probability distribution $\tilde{p}_n$ can be calculated as $\tilde{p}_n=\davgObs{\tilde{n}}{\tilde{\rho}_{I}(\tau)}{\tilde{n}}=\avgObs{n}{U_{2\tau}^\dagger\rho(2\tau)U_{2\tau}}$.
The inequality \eqref{eq:ent.prod.bound.1} holds for an arbitrary time $\tau>0$ and the entropy production increases in time (i.e., $\Sigma_\tau\ge\Sigma_{\tau/2}$); therefore, Eq.~\eqref{eq:main.result.1} is immediately obtained.

\emph{Conclusion.}---In this Letter, we derived the fundamental bound on irreversibility for thermal relaxation processes of Markovian open quantum systems.
The bound refines the second law of thermodynamics and can be evaluated without knowing details of the underlying dynamics; therefore, it is applicable to the estimation of irreversible entropy production.
Since thermal relaxation is the basis for heat engines, the result of this study is expected to lay the foundations to obtain useful thermodynamic bounds on relevant physical quantities such as power and efficiency \cite{Brandner.2020.PRL}.

\begin{acknowledgments}
The authors acknowledge Keiji Saito, Francesco Buscemi, and Valerio Scarani for fruitful discussions.
The authors are greatly grateful to Keiji Saito for the careful reading of the manuscript and for invaluable comments.
This work was supported by Ministry of Education, Culture, Sports, Science and Technology (MEXT) KAKENHI Grant No. JP19K12153.
\end{acknowledgments}


\begin{thebibliography}{88}%
\makeatletter
\providecommand \@ifxundefined [1]{%
 \@ifx{#1\undefined}
}%
\providecommand \@ifnum [1]{%
 \ifnum #1\expandafter \@firstoftwo
 \else \expandafter \@secondoftwo
 \fi
}%
\providecommand \@ifx [1]{%
 \ifx #1\expandafter \@firstoftwo
 \else \expandafter \@secondoftwo
 \fi
}%
\providecommand \natexlab [1]{#1}%
\providecommand \enquote  [1]{``#1''}%
\providecommand \bibnamefont  [1]{#1}%
\providecommand \bibfnamefont [1]{#1}%
\providecommand \citenamefont [1]{#1}%
\providecommand \href@noop [0]{\@secondoftwo}%
\providecommand \href [0]{\begingroup \@sanitize@url \@href}%
\providecommand \@href[1]{\@@startlink{#1}\@@href}%
\providecommand \@@href[1]{\endgroup#1\@@endlink}%
\providecommand \@sanitize@url [0]{\catcode `\\12\catcode `\$12\catcode
  `\&12\catcode `\#12\catcode `\^12\catcode `\_12\catcode `\%12\relax}%
\providecommand \@@startlink[1]{}%
\providecommand \@@endlink[0]{}%
\providecommand \url  [0]{\begingroup\@sanitize@url \@url }%
\providecommand \@url [1]{\endgroup\@href {#1}{\urlprefix }}%
\providecommand \urlprefix  [0]{URL }%
\providecommand \Eprint [0]{\href }%
\providecommand \doibase [0]{https://doi.org/}%
\providecommand \selectlanguage [0]{\@gobble}%
\providecommand \bibinfo  [0]{\@secondoftwo}%
\providecommand \bibfield  [0]{\@secondoftwo}%
\providecommand \translation [1]{[#1]}%
\providecommand \BibitemOpen [0]{}%
\providecommand \bibitemStop [0]{}%
\providecommand \bibitemNoStop [0]{.\EOS\space}%
\providecommand \EOS [0]{\spacefactor3000\relax}%
\providecommand \BibitemShut  [1]{\csname bibitem#1\endcsname}%
\let\auto@bib@innerbib\@empty
%</preamble>
\bibitem [{\citenamefont {Sekimoto}(2010)}]{Sekimoto.2010}%
  \BibitemOpen
  \bibfield  {author} {\bibinfo {author} {\bibfnamefont {K.}~\bibnamefont
  {Sekimoto}},\ }\href@noop {} {\emph {\bibinfo {title} {{Stochastic
  Energetics}}}},\ Vol.\ \bibinfo {volume} {799}\ (\bibinfo  {publisher}
  {Springer},\ \bibinfo {address} {Berlin},\ \bibinfo {year}
  {2010})\BibitemShut {NoStop}%
\bibitem [{\citenamefont {Seifert}(2012)}]{Seifert.2012.RPP}%
  \BibitemOpen
  \bibfield  {author} {\bibinfo {author} {\bibfnamefont {U.}~\bibnamefont
  {Seifert}},\ }\bibfield  {title} {\bibinfo {title} {{Stochastic
  thermodynamics, fluctuation theorems and molecular machines}},\ }\href
  {https://doi.org/10.1088/0034-4885/75/12/126001} {\bibfield  {journal}
  {\bibinfo  {journal} {Rep. Prog. Phys.}\ }\textbf {\bibinfo {volume} {75}},\
  \bibinfo {pages} {126001} (\bibinfo {year} {2012})}\BibitemShut {NoStop}%
\bibitem [{\citenamefont {Vinjanampathy}\ and\ \citenamefont
  {Anders}(2016)}]{Vinjanampathy.2016.CP}%
  \BibitemOpen
  \bibfield  {author} {\bibinfo {author} {\bibfnamefont {S.}~\bibnamefont
  {Vinjanampathy}}\ and\ \bibinfo {author} {\bibfnamefont {J.}~\bibnamefont
  {Anders}},\ }\bibfield  {title} {\bibinfo {title} {{Quantum
  thermodynamics}},\ }\href {https://doi.org/10.1080/00107514.2016.1201896}
  {\bibfield  {journal} {\bibinfo  {journal} {Contemp. Phys.}\ }\textbf
  {\bibinfo {volume} {57}},\ \bibinfo {pages} {545} (\bibinfo {year}
  {2016})}\BibitemShut {NoStop}%
\bibitem [{\citenamefont {Goold}\ \emph {et~al.}(2016)\citenamefont {Goold},
  \citenamefont {Huber}, \citenamefont {Riera}, \citenamefont {del Rio},\ and\
  \citenamefont {Skrzypczyk}}]{Goold.2016.JPA}%
  \BibitemOpen
  \bibfield  {author} {\bibinfo {author} {\bibfnamefont {J.}~\bibnamefont
  {Goold}}, \bibinfo {author} {\bibfnamefont {M.}~\bibnamefont {Huber}},
  \bibinfo {author} {\bibfnamefont {A.}~\bibnamefont {Riera}}, \bibinfo
  {author} {\bibfnamefont {L.}~\bibnamefont {del Rio}},\ and\ \bibinfo {author}
  {\bibfnamefont {P.}~\bibnamefont {Skrzypczyk}},\ }\bibfield  {title}
  {\bibinfo {title} {{The role of quantum information in
  thermodynamics{\textemdash}a topical review}},\ }\href
  {https://doi.org/10.1088/1751-8113/49/14/143001} {\bibfield  {journal}
  {\bibinfo  {journal} {J. Phys. A}\ }\textbf {\bibinfo {volume} {49}},\
  \bibinfo {pages} {143001} (\bibinfo {year} {2016})}\BibitemShut {NoStop}%
\bibitem [{\citenamefont {Deffner}\ and\ \citenamefont
  {Campbell}(2019)}]{Deffner.2019}%
  \BibitemOpen
  \bibfield  {author} {\bibinfo {author} {\bibfnamefont {S.}~\bibnamefont
  {Deffner}}\ and\ \bibinfo {author} {\bibfnamefont {S.}~\bibnamefont
  {Campbell}},\ }\href@noop {} {\emph {\bibinfo {title} {{Quantum
  Thermodynamics}}}}\ (\bibinfo  {publisher} {Morgan \& Claypool Publishers},\
  \bibinfo {address} {San Rafael},\ \bibinfo {year} {2019})\BibitemShut
  {NoStop}%
\bibitem [{\citenamefont {Evans}\ \emph {et~al.}(1993)\citenamefont {Evans},
  \citenamefont {Cohen},\ and\ \citenamefont {Morriss}}]{Evans.1993.PRL}%
  \BibitemOpen
  \bibfield  {author} {\bibinfo {author} {\bibfnamefont {D.~J.}\ \bibnamefont
  {Evans}}, \bibinfo {author} {\bibfnamefont {E.~G.~D.}\ \bibnamefont
  {Cohen}},\ and\ \bibinfo {author} {\bibfnamefont {G.~P.}\ \bibnamefont
  {Morriss}},\ }\bibfield  {title} {\bibinfo {title} {{Probability of second
  law violations in shearing steady states}},\ }\href
  {https://doi.org/10.1103/PhysRevLett.71.2401} {\bibfield  {journal} {\bibinfo
   {journal} {Phys. Rev. Lett.}\ }\textbf {\bibinfo {volume} {71}},\ \bibinfo
  {pages} {2401} (\bibinfo {year} {1993})}\BibitemShut {NoStop}%
\bibitem [{\citenamefont {Gallavotti}\ and\ \citenamefont
  {Cohen}(1995)}]{Gallavotti.1995.PRL}%
  \BibitemOpen
  \bibfield  {author} {\bibinfo {author} {\bibfnamefont {G.}~\bibnamefont
  {Gallavotti}}\ and\ \bibinfo {author} {\bibfnamefont {E.~G.~D.}\ \bibnamefont
  {Cohen}},\ }\bibfield  {title} {\bibinfo {title} {{Dynamical ensembles in
  nonequilibrium statistical mechanics}},\ }\href
  {https://doi.org/10.1103/PhysRevLett.74.2694} {\bibfield  {journal} {\bibinfo
   {journal} {Phys. Rev. Lett.}\ }\textbf {\bibinfo {volume} {74}},\ \bibinfo
  {pages} {2694} (\bibinfo {year} {1995})}\BibitemShut {NoStop}%
\bibitem [{\citenamefont {Crooks}(1999)}]{Crooks.1999.PRE}%
  \BibitemOpen
  \bibfield  {author} {\bibinfo {author} {\bibfnamefont {G.~E.}\ \bibnamefont
  {Crooks}},\ }\bibfield  {title} {\bibinfo {title} {{Entropy production
  fluctuation theorem and the nonequilibrium work relation for free energy
  differences}},\ }\href {https://doi.org/10.1103/PhysRevE.60.2721} {\bibfield
  {journal} {\bibinfo  {journal} {Phys. Rev. E}\ }\textbf {\bibinfo {volume}
  {60}},\ \bibinfo {pages} {2721} (\bibinfo {year} {1999})}\BibitemShut
  {NoStop}%
\bibitem [{\citenamefont {Jarzynski}(2000)}]{Jarzynski.2000.JSP}%
  \BibitemOpen
  \bibfield  {author} {\bibinfo {author} {\bibfnamefont {C.}~\bibnamefont
  {Jarzynski}},\ }\bibfield  {title} {\bibinfo {title} {{Hamiltonian derivation
  of a detailed fluctuation theorem}},\ }\href
  {https://doi.org/10.1023/a:1018670721277} {\bibfield  {journal} {\bibinfo
  {journal} {J. Stat. Phys.}\ }\textbf {\bibinfo {volume} {98}},\ \bibinfo
  {pages} {77} (\bibinfo {year} {2000})}\BibitemShut {NoStop}%
\bibitem [{\citenamefont {Esposito}\ \emph {et~al.}(2009)\citenamefont
  {Esposito}, \citenamefont {Harbola},\ and\ \citenamefont
  {Mukamel}}]{Esposito.2009.RMP}%
  \BibitemOpen
  \bibfield  {author} {\bibinfo {author} {\bibfnamefont {M.}~\bibnamefont
  {Esposito}}, \bibinfo {author} {\bibfnamefont {U.}~\bibnamefont {Harbola}},\
  and\ \bibinfo {author} {\bibfnamefont {S.}~\bibnamefont {Mukamel}},\
  }\bibfield  {title} {\bibinfo {title} {{Nonequilibrium fluctuations,
  fluctuation theorems, and counting statistics in quantum systems}},\ }\href
  {https://doi.org/10.1103/RevModPhys.81.1665} {\bibfield  {journal} {\bibinfo
  {journal} {Rev. Mod. Phys.}\ }\textbf {\bibinfo {volume} {81}},\ \bibinfo
  {pages} {1665} (\bibinfo {year} {2009})}\BibitemShut {NoStop}%
\bibitem [{\citenamefont {Campisi}\ \emph {et~al.}(2011)\citenamefont
  {Campisi}, \citenamefont {H\"anggi},\ and\ \citenamefont
  {Talkner}}]{Campisi.2011.RMP}%
  \BibitemOpen
  \bibfield  {author} {\bibinfo {author} {\bibfnamefont {M.}~\bibnamefont
  {Campisi}}, \bibinfo {author} {\bibfnamefont {P.}~\bibnamefont {H\"anggi}},\
  and\ \bibinfo {author} {\bibfnamefont {P.}~\bibnamefont {Talkner}},\
  }\bibfield  {title} {\bibinfo {title} {{Colloquium: Quantum fluctuation
  relations: Foundations and applications}},\ }\href
  {https://doi.org/10.1103/RevModPhys.83.771} {\bibfield  {journal} {\bibinfo
  {journal} {Rev. Mod. Phys.}\ }\textbf {\bibinfo {volume} {83}},\ \bibinfo
  {pages} {771} (\bibinfo {year} {2011})}\BibitemShut {NoStop}%
\bibitem [{\citenamefont {Gallavotti}(1996)}]{Gallavotti.1996.PRL}%
  \BibitemOpen
  \bibfield  {author} {\bibinfo {author} {\bibfnamefont {G.}~\bibnamefont
  {Gallavotti}},\ }\bibfield  {title} {\bibinfo {title} {{Extension of
  Onsager's reciprocity to large fields and the chaotic hypothesis}},\ }\href
  {https://doi.org/10.1103/PhysRevLett.77.4334} {\bibfield  {journal} {\bibinfo
   {journal} {Phys. Rev. Lett.}\ }\textbf {\bibinfo {volume} {77}},\ \bibinfo
  {pages} {4334} (\bibinfo {year} {1996})}\BibitemShut {NoStop}%
\bibitem [{\citenamefont {Andrieux}\ and\ \citenamefont
  {Gaspard}(2007)}]{Andrieux.2007.JSM}%
  \BibitemOpen
  \bibfield  {author} {\bibinfo {author} {\bibfnamefont {D.}~\bibnamefont
  {Andrieux}}\ and\ \bibinfo {author} {\bibfnamefont {P.}~\bibnamefont
  {Gaspard}},\ }\bibfield  {title} {\bibinfo {title} {{A fluctuation theorem
  for currents and non-linear response coefficients}},\ }\href
  {https://doi.org/10.1088/1742-5468/2007/02/p02006} {\bibfield  {journal}
  {\bibinfo  {journal} {J. Stat. Mech.: Theory Exp.}\ }\textbf {\bibinfo
  {volume} {2007}},\ \bibinfo {pages} {P02006}}\BibitemShut {NoStop}%
\bibitem [{\citenamefont {Saito}\ and\ \citenamefont
  {Utsumi}(2008)}]{Saito.2008.PRB}%
  \BibitemOpen
  \bibfield  {author} {\bibinfo {author} {\bibfnamefont {K.}~\bibnamefont
  {Saito}}\ and\ \bibinfo {author} {\bibfnamefont {Y.}~\bibnamefont {Utsumi}},\
  }\bibfield  {title} {\bibinfo {title} {{Symmetry in full counting statistics,
  fluctuation theorem, and relations among nonlinear transport coefficients in
  the presence of a magnetic field}},\ }\href
  {https://doi.org/10.1103/PhysRevB.78.115429} {\bibfield  {journal} {\bibinfo
  {journal} {Phys. Rev. B}\ }\textbf {\bibinfo {volume} {78}},\ \bibinfo
  {pages} {115429} (\bibinfo {year} {2008})}\BibitemShut {NoStop}%
\bibitem [{\citenamefont {Barato}\ and\ \citenamefont
  {Seifert}(2015)}]{Barato.2015.PRL}%
  \BibitemOpen
  \bibfield  {author} {\bibinfo {author} {\bibfnamefont {A.~C.}\ \bibnamefont
  {Barato}}\ and\ \bibinfo {author} {\bibfnamefont {U.}~\bibnamefont
  {Seifert}},\ }\bibfield  {title} {\bibinfo {title} {{Thermodynamic
  uncertainty relation for biomolecular processes}},\ }\href
  {https://doi.org/10.1103/PhysRevLett.114.158101} {\bibfield  {journal}
  {\bibinfo  {journal} {Phys. Rev. Lett.}\ }\textbf {\bibinfo {volume} {114}},\
  \bibinfo {pages} {158101} (\bibinfo {year} {2015})}\BibitemShut {NoStop}%
\bibitem [{\citenamefont {Gingrich}\ \emph {et~al.}(2016)\citenamefont
  {Gingrich}, \citenamefont {Horowitz}, \citenamefont {Perunov},\ and\
  \citenamefont {England}}]{Gingrich.2016.PRL}%
  \BibitemOpen
  \bibfield  {author} {\bibinfo {author} {\bibfnamefont {T.~R.}\ \bibnamefont
  {Gingrich}}, \bibinfo {author} {\bibfnamefont {J.~M.}\ \bibnamefont
  {Horowitz}}, \bibinfo {author} {\bibfnamefont {N.}~\bibnamefont {Perunov}},\
  and\ \bibinfo {author} {\bibfnamefont {J.~L.}\ \bibnamefont {England}},\
  }\bibfield  {title} {\bibinfo {title} {{Dissipation bounds all steady-state
  current fluctuations}},\ }\href
  {https://doi.org/10.1103/PhysRevLett.116.120601} {\bibfield  {journal}
  {\bibinfo  {journal} {Phys. Rev. Lett.}\ }\textbf {\bibinfo {volume} {116}},\
  \bibinfo {pages} {120601} (\bibinfo {year} {2016})}\BibitemShut {NoStop}%
\bibitem [{\citenamefont {Horowitz}\ and\ \citenamefont
  {Gingrich}(2017)}]{Horowitz.2017.PRE}%
  \BibitemOpen
  \bibfield  {author} {\bibinfo {author} {\bibfnamefont {J.~M.}\ \bibnamefont
  {Horowitz}}\ and\ \bibinfo {author} {\bibfnamefont {T.~R.}\ \bibnamefont
  {Gingrich}},\ }\bibfield  {title} {\bibinfo {title} {{Proof of the
  finite-time thermodynamic uncertainty relation for steady-state currents}},\
  }\href {https://doi.org/10.1103/PhysRevE.96.020103} {\bibfield  {journal}
  {\bibinfo  {journal} {Phys. Rev. E}\ }\textbf {\bibinfo {volume} {96}},\
  \bibinfo {pages} {020103(R)} (\bibinfo {year} {2017})}\BibitemShut {NoStop}%
\bibitem [{\citenamefont {Proesmans}\ and\ \citenamefont {den
  Broeck}(2017)}]{Proesmans.2017.EPL}%
  \BibitemOpen
  \bibfield  {author} {\bibinfo {author} {\bibfnamefont {K.}~\bibnamefont
  {Proesmans}}\ and\ \bibinfo {author} {\bibfnamefont {C.~V.}\ \bibnamefont
  {den Broeck}},\ }\bibfield  {title} {\bibinfo {title} {{Discrete-time
  thermodynamic uncertainty relation}},\ }\href
  {https://doi.org/10.1209/0295-5075/119/20001} {\bibfield  {journal} {\bibinfo
   {journal} {Europhys. Lett.}\ }\textbf {\bibinfo {volume} {119}},\ \bibinfo
  {pages} {20001} (\bibinfo {year} {2017})}\BibitemShut {NoStop}%
\bibitem [{\citenamefont {Brandner}\ \emph {et~al.}(2018)\citenamefont
  {Brandner}, \citenamefont {Hanazato},\ and\ \citenamefont
  {Saito}}]{Brandner.2018.PRL}%
  \BibitemOpen
  \bibfield  {author} {\bibinfo {author} {\bibfnamefont {K.}~\bibnamefont
  {Brandner}}, \bibinfo {author} {\bibfnamefont {T.}~\bibnamefont {Hanazato}},\
  and\ \bibinfo {author} {\bibfnamefont {K.}~\bibnamefont {Saito}},\ }\bibfield
   {title} {\bibinfo {title} {{Thermodynamic bounds on precision in ballistic
  multiterminal transport}},\ }\href
  {https://doi.org/10.1103/PhysRevLett.120.090601} {\bibfield  {journal}
  {\bibinfo  {journal} {Phys. Rev. Lett.}\ }\textbf {\bibinfo {volume} {120}},\
  \bibinfo {pages} {090601} (\bibinfo {year} {2018})}\BibitemShut {NoStop}%
\bibitem [{\citenamefont {Hasegawa}\ and\ \citenamefont
  {Van~Vu}(2019{\natexlab{a}})}]{Hasegawa.2019.PRE}%
  \BibitemOpen
  \bibfield  {author} {\bibinfo {author} {\bibfnamefont {Y.}~\bibnamefont
  {Hasegawa}}\ and\ \bibinfo {author} {\bibfnamefont {T.}~\bibnamefont
  {Van~Vu}},\ }\bibfield  {title} {\bibinfo {title} {{Uncertainty relations in
  stochastic processes: An information inequality approach}},\ }\href
  {https://doi.org/10.1103/PhysRevE.99.062126} {\bibfield  {journal} {\bibinfo
  {journal} {Phys. Rev. E}\ }\textbf {\bibinfo {volume} {99}},\ \bibinfo
  {pages} {062126} (\bibinfo {year} {2019}{\natexlab{a}})}\BibitemShut
  {NoStop}%
\bibitem [{\citenamefont {Van~Vu}\ and\ \citenamefont
  {Hasegawa}(2019)}]{Vu.2019.PRE.UnderdampedTUR}%
  \BibitemOpen
  \bibfield  {author} {\bibinfo {author} {\bibfnamefont {T.}~\bibnamefont
  {Van~Vu}}\ and\ \bibinfo {author} {\bibfnamefont {Y.}~\bibnamefont
  {Hasegawa}},\ }\bibfield  {title} {\bibinfo {title} {{Uncertainty relations
  for underdamped Langevin dynamics}},\ }\href
  {https://doi.org/10.1103/PhysRevE.100.032130} {\bibfield  {journal} {\bibinfo
   {journal} {Phys. Rev. E}\ }\textbf {\bibinfo {volume} {100}},\ \bibinfo
  {pages} {032130} (\bibinfo {year} {2019})}\BibitemShut {NoStop}%
\bibitem [{\citenamefont {Hasegawa}\ and\ \citenamefont
  {Van~Vu}(2019{\natexlab{b}})}]{Hasegawa.2019.PRL}%
  \BibitemOpen
  \bibfield  {author} {\bibinfo {author} {\bibfnamefont {Y.}~\bibnamefont
  {Hasegawa}}\ and\ \bibinfo {author} {\bibfnamefont {T.}~\bibnamefont
  {Van~Vu}},\ }\bibfield  {title} {\bibinfo {title} {{Fluctuation theorem
  uncertainty relation}},\ }\href
  {https://doi.org/10.1103/PhysRevLett.123.110602} {\bibfield  {journal}
  {\bibinfo  {journal} {Phys. Rev. Lett.}\ }\textbf {\bibinfo {volume} {123}},\
  \bibinfo {pages} {110602} (\bibinfo {year} {2019}{\natexlab{b}})}\BibitemShut
  {NoStop}%
\bibitem [{\citenamefont {Timpanaro}\ \emph {et~al.}(2019)\citenamefont
  {Timpanaro}, \citenamefont {Guarnieri}, \citenamefont {Goold},\ and\
  \citenamefont {Landi}}]{Timpanaro.2019.PRL}%
  \BibitemOpen
  \bibfield  {author} {\bibinfo {author} {\bibfnamefont {A.~M.}\ \bibnamefont
  {Timpanaro}}, \bibinfo {author} {\bibfnamefont {G.}~\bibnamefont
  {Guarnieri}}, \bibinfo {author} {\bibfnamefont {J.}~\bibnamefont {Goold}},\
  and\ \bibinfo {author} {\bibfnamefont {G.~T.}\ \bibnamefont {Landi}},\
  }\bibfield  {title} {\bibinfo {title} {{Thermodynamic uncertainty relations
  from exchange fluctuation theorems}},\ }\href
  {https://doi.org/10.1103/PhysRevLett.123.090604} {\bibfield  {journal}
  {\bibinfo  {journal} {Phys. Rev. Lett.}\ }\textbf {\bibinfo {volume} {123}},\
  \bibinfo {pages} {090604} (\bibinfo {year} {2019})}\BibitemShut {NoStop}%
\bibitem [{\citenamefont {Guarnieri}\ \emph {et~al.}(2019)\citenamefont
  {Guarnieri}, \citenamefont {Landi}, \citenamefont {Clark},\ and\
  \citenamefont {Goold}}]{Guarnieri.2019.PRR}%
  \BibitemOpen
  \bibfield  {author} {\bibinfo {author} {\bibfnamefont {G.}~\bibnamefont
  {Guarnieri}}, \bibinfo {author} {\bibfnamefont {G.~T.}\ \bibnamefont
  {Landi}}, \bibinfo {author} {\bibfnamefont {S.~R.}\ \bibnamefont {Clark}},\
  and\ \bibinfo {author} {\bibfnamefont {J.}~\bibnamefont {Goold}},\ }\bibfield
   {title} {\bibinfo {title} {{Thermodynamics of precision in quantum
  nonequilibrium steady states}},\ }\href
  {https://doi.org/10.1103/PhysRevResearch.1.033021} {\bibfield  {journal}
  {\bibinfo  {journal} {Phys. Rev. Research}\ }\textbf {\bibinfo {volume}
  {1}},\ \bibinfo {pages} {033021} (\bibinfo {year} {2019})}\BibitemShut
  {NoStop}%
\bibitem [{\citenamefont {Carollo}\ \emph {et~al.}(2019)\citenamefont
  {Carollo}, \citenamefont {Jack},\ and\ \citenamefont
  {Garrahan}}]{Carollo.2019.PRL}%
  \BibitemOpen
  \bibfield  {author} {\bibinfo {author} {\bibfnamefont {F.}~\bibnamefont
  {Carollo}}, \bibinfo {author} {\bibfnamefont {R.~L.}\ \bibnamefont {Jack}},\
  and\ \bibinfo {author} {\bibfnamefont {J.~P.}\ \bibnamefont {Garrahan}},\
  }\bibfield  {title} {\bibinfo {title} {{Unraveling the large deviation
  statistics of Markovian open quantum systems}},\ }\href
  {https://doi.org/10.1103/PhysRevLett.122.130605} {\bibfield  {journal}
  {\bibinfo  {journal} {Phys. Rev. Lett.}\ }\textbf {\bibinfo {volume} {122}},\
  \bibinfo {pages} {130605} (\bibinfo {year} {2019})}\BibitemShut {NoStop}%
\bibitem [{\citenamefont {Vo}\ \emph {et~al.}(2020)\citenamefont {Vo},
  \citenamefont {Van~Vu},\ and\ \citenamefont {Hasegawa}}]{Vo.2020.PRE}%
  \BibitemOpen
  \bibfield  {author} {\bibinfo {author} {\bibfnamefont {V.~T.}\ \bibnamefont
  {Vo}}, \bibinfo {author} {\bibfnamefont {T.}~\bibnamefont {Van~Vu}},\ and\
  \bibinfo {author} {\bibfnamefont {Y.}~\bibnamefont {Hasegawa}},\ }\bibfield
  {title} {\bibinfo {title} {{Unified approach to classical speed limit and
  thermodynamic uncertainty relation}},\ }\href
  {https://doi.org/10.1103/PhysRevE.102.062132} {\bibfield  {journal} {\bibinfo
   {journal} {Phys. Rev. E}\ }\textbf {\bibinfo {volume} {102}},\ \bibinfo
  {pages} {062132} (\bibinfo {year} {2020})}\BibitemShut {NoStop}%
\bibitem [{\citenamefont {Dechant}\ and\ \citenamefont
  {Sasa}(2020)}]{Dechant.2020.PNAS}%
  \BibitemOpen
  \bibfield  {author} {\bibinfo {author} {\bibfnamefont {A.}~\bibnamefont
  {Dechant}}\ and\ \bibinfo {author} {\bibfnamefont {S.-i.}\ \bibnamefont
  {Sasa}},\ }\bibfield  {title} {\bibinfo {title} {{Fluctuation-response
  inequality out of equilibrium}},\ }\href
  {https://doi.org/10.1073/pnas.1918386117} {\bibfield  {journal} {\bibinfo
  {journal} {Proc. Natl. Acad. Sci. U.S.A.}\ }\textbf {\bibinfo {volume}
  {117}},\ \bibinfo {pages} {6430} (\bibinfo {year} {2020})}\BibitemShut
  {NoStop}%
\bibitem [{\citenamefont {Hasegawa}(2020)}]{Hasegawa.2020.PRL}%
  \BibitemOpen
  \bibfield  {author} {\bibinfo {author} {\bibfnamefont {Y.}~\bibnamefont
  {Hasegawa}},\ }\bibfield  {title} {\bibinfo {title} {{Quantum thermodynamic
  uncertainty relation for continuous measurement}},\ }\href
  {https://doi.org/10.1103/PhysRevLett.125.050601} {\bibfield  {journal}
  {\bibinfo  {journal} {Phys. Rev. Lett.}\ }\textbf {\bibinfo {volume} {125}},\
  \bibinfo {pages} {050601} (\bibinfo {year} {2020})}\BibitemShut {NoStop}%
\bibitem [{\citenamefont {Van~Vu}\ and\ \citenamefont
  {Hasegawa}(2020)}]{Vu.2020.PRR}%
  \BibitemOpen
  \bibfield  {author} {\bibinfo {author} {\bibfnamefont {T.}~\bibnamefont
  {Van~Vu}}\ and\ \bibinfo {author} {\bibfnamefont {Y.}~\bibnamefont
  {Hasegawa}},\ }\bibfield  {title} {\bibinfo {title} {{Thermodynamic
  uncertainty relations under arbitrary control protocols}},\ }\href
  {https://doi.org/10.1103/PhysRevResearch.2.013060} {\bibfield  {journal}
  {\bibinfo  {journal} {Phys. Rev. Research}\ }\textbf {\bibinfo {volume}
  {2}},\ \bibinfo {pages} {013060} (\bibinfo {year} {2020})}\BibitemShut
  {NoStop}%
\bibitem [{\citenamefont {Koyuk}\ and\ \citenamefont
  {Seifert}(2020)}]{Koyuk.2020.PRL}%
  \BibitemOpen
  \bibfield  {author} {\bibinfo {author} {\bibfnamefont {T.}~\bibnamefont
  {Koyuk}}\ and\ \bibinfo {author} {\bibfnamefont {U.}~\bibnamefont
  {Seifert}},\ }\bibfield  {title} {\bibinfo {title} {{Thermodynamic
  uncertainty relation for time-dependent driving}},\ }\href
  {https://doi.org/10.1103/PhysRevLett.125.260604} {\bibfield  {journal}
  {\bibinfo  {journal} {Phys. Rev. Lett.}\ }\textbf {\bibinfo {volume} {125}},\
  \bibinfo {pages} {260604} (\bibinfo {year} {2020})}\BibitemShut {NoStop}%
\bibitem [{\citenamefont {Miller}\ \emph {et~al.}(2021)\citenamefont {Miller},
  \citenamefont {Mohammady}, \citenamefont {Perarnau-Llobet},\ and\
  \citenamefont {Guarnieri}}]{Miller.2021.PRL.TUR}%
  \BibitemOpen
  \bibfield  {author} {\bibinfo {author} {\bibfnamefont {H.~J.~D.}\
  \bibnamefont {Miller}}, \bibinfo {author} {\bibfnamefont {M.~H.}\
  \bibnamefont {Mohammady}}, \bibinfo {author} {\bibfnamefont {M.}~\bibnamefont
  {Perarnau-Llobet}},\ and\ \bibinfo {author} {\bibfnamefont {G.}~\bibnamefont
  {Guarnieri}},\ }\bibfield  {title} {\bibinfo {title} {{Thermodynamic
  uncertainty relation in slowly driven quantum heat engines}},\ }\href
  {https://doi.org/10.1103/PhysRevLett.126.210603} {\bibfield  {journal}
  {\bibinfo  {journal} {Phys. Rev. Lett.}\ }\textbf {\bibinfo {volume} {126}},\
  \bibinfo {pages} {210603} (\bibinfo {year} {2021})}\BibitemShut {NoStop}%
\bibitem [{\citenamefont {Hasegawa}(2021)}]{Hasegawa.2021.PRL}%
  \BibitemOpen
  \bibfield  {author} {\bibinfo {author} {\bibfnamefont {Y.}~\bibnamefont
  {Hasegawa}},\ }\bibfield  {title} {\bibinfo {title} {{Thermodynamic
  uncertainty relation for general open quantum systems}},\ }\href
  {https://doi.org/10.1103/PhysRevLett.126.010602} {\bibfield  {journal}
  {\bibinfo  {journal} {Phys. Rev. Lett.}\ }\textbf {\bibinfo {volume} {126}},\
  \bibinfo {pages} {010602} (\bibinfo {year} {2021})}\BibitemShut {NoStop}%
\bibitem [{\citenamefont {Sacchi}(2021)}]{Sacchi.2021.PRE}%
  \BibitemOpen
  \bibfield  {author} {\bibinfo {author} {\bibfnamefont {M.~F.}\ \bibnamefont
  {Sacchi}},\ }\bibfield  {title} {\bibinfo {title} {{Thermodynamic uncertainty
  relations for bosonic Otto engines}},\ }\href
  {https://doi.org/10.1103/PhysRevE.103.012111} {\bibfield  {journal} {\bibinfo
   {journal} {Phys. Rev. E}\ }\textbf {\bibinfo {volume} {103}},\ \bibinfo
  {pages} {012111} (\bibinfo {year} {2021})}\BibitemShut {NoStop}%
\bibitem [{\citenamefont {Horowitz}\ and\ \citenamefont
  {Gingrich}(2020)}]{Horowitz.2020.NP}%
  \BibitemOpen
  \bibfield  {author} {\bibinfo {author} {\bibfnamefont {J.~M.}\ \bibnamefont
  {Horowitz}}\ and\ \bibinfo {author} {\bibfnamefont {T.~R.}\ \bibnamefont
  {Gingrich}},\ }\bibfield  {title} {\bibinfo {title} {{Thermodynamic
  uncertainty relations constrain non-equilibrium fluctuations}},\ }\href
  {https://doi.org/10.1038/s41567-019-0702-6} {\bibfield  {journal} {\bibinfo
  {journal} {Nat. Phys.}\ }\textbf {\bibinfo {volume} {16}},\ \bibinfo {pages}
  {15} (\bibinfo {year} {2020})}\BibitemShut {NoStop}%
\bibitem [{\citenamefont {Mandelstam}\ and\ \citenamefont
  {Tamm}(1945)}]{Mandelstam.1945.JP}%
  \BibitemOpen
  \bibfield  {author} {\bibinfo {author} {\bibfnamefont {L.}~\bibnamefont
  {Mandelstam}}\ and\ \bibinfo {author} {\bibfnamefont {I.}~\bibnamefont
  {Tamm}},\ }\bibfield  {title} {\bibinfo {title} {{The uncertainty relation
  between energy and time in non-relativistic quantum mechanics}},\ }\href
  {https://doi.org/10.1007/978-3-642-74626-0_8} {\bibfield  {journal} {\bibinfo
   {journal} {J. Phys. USSR}\ }\textbf {\bibinfo {volume} {9}},\ \bibinfo
  {pages} {249} (\bibinfo {year} {1945})}\BibitemShut {NoStop}%
\bibitem [{\citenamefont {Margolus}\ and\ \citenamefont
  {Levitin}(1998)}]{Margolus.1998.PD}%
  \BibitemOpen
  \bibfield  {author} {\bibinfo {author} {\bibfnamefont {N.}~\bibnamefont
  {Margolus}}\ and\ \bibinfo {author} {\bibfnamefont {L.~B.}\ \bibnamefont
  {Levitin}},\ }\bibfield  {title} {\bibinfo {title} {{The maximum speed of
  dynamical evolution}},\ }\href
  {https://doi.org/10.1016/S0167-2789(98)00054-2} {\bibfield  {journal}
  {\bibinfo  {journal} {Physica D}\ }\textbf {\bibinfo {volume} {120}},\
  \bibinfo {pages} {188} (\bibinfo {year} {1998})}\BibitemShut {NoStop}%
\bibitem [{\citenamefont {del Campo}\ \emph {et~al.}(2013)\citenamefont {del
  Campo}, \citenamefont {Egusquiza}, \citenamefont {Plenio},\ and\
  \citenamefont {Huelga}}]{Campo.2013.PRL}%
  \BibitemOpen
  \bibfield  {author} {\bibinfo {author} {\bibfnamefont {A.}~\bibnamefont {del
  Campo}}, \bibinfo {author} {\bibfnamefont {I.~L.}\ \bibnamefont {Egusquiza}},
  \bibinfo {author} {\bibfnamefont {M.~B.}\ \bibnamefont {Plenio}},\ and\
  \bibinfo {author} {\bibfnamefont {S.~F.}\ \bibnamefont {Huelga}},\ }\bibfield
   {title} {\bibinfo {title} {{Quantum speed limits in open system dynamics}},\
  }\href {https://doi.org/10.1103/PhysRevLett.110.050403} {\bibfield  {journal}
  {\bibinfo  {journal} {Phys. Rev. Lett.}\ }\textbf {\bibinfo {volume} {110}},\
  \bibinfo {pages} {050403} (\bibinfo {year} {2013})}\BibitemShut {NoStop}%
\bibitem [{\citenamefont {Pires}\ \emph {et~al.}(2016)\citenamefont {Pires},
  \citenamefont {Cianciaruso}, \citenamefont {C\'eleri}, \citenamefont
  {Adesso},\ and\ \citenamefont {Soares-Pinto}}]{Pires.2016.PRX}%
  \BibitemOpen
  \bibfield  {author} {\bibinfo {author} {\bibfnamefont {D.~P.}\ \bibnamefont
  {Pires}}, \bibinfo {author} {\bibfnamefont {M.}~\bibnamefont {Cianciaruso}},
  \bibinfo {author} {\bibfnamefont {L.~C.}\ \bibnamefont {C\'eleri}}, \bibinfo
  {author} {\bibfnamefont {G.}~\bibnamefont {Adesso}},\ and\ \bibinfo {author}
  {\bibfnamefont {D.~O.}\ \bibnamefont {Soares-Pinto}},\ }\bibfield  {title}
  {\bibinfo {title} {{Generalized geometric quantum speed limits}},\ }\href
  {https://doi.org/10.1103/PhysRevX.6.021031} {\bibfield  {journal} {\bibinfo
  {journal} {Phys. Rev. X}\ }\textbf {\bibinfo {volume} {6}},\ \bibinfo {pages}
  {021031} (\bibinfo {year} {2016})}\BibitemShut {NoStop}%
\bibitem [{\citenamefont {Deffner}\ and\ \citenamefont
  {Campbell}(2017)}]{Deffner.2017.JPA}%
  \BibitemOpen
  \bibfield  {author} {\bibinfo {author} {\bibfnamefont {S.}~\bibnamefont
  {Deffner}}\ and\ \bibinfo {author} {\bibfnamefont {S.}~\bibnamefont
  {Campbell}},\ }\bibfield  {title} {\bibinfo {title} {{Quantum speed limits:
  from Heisenberg's uncertainty principle to optimal quantum control}},\ }\href
  {https://doi.org/10.1088/1751-8121/aa86c6} {\bibfield  {journal} {\bibinfo
  {journal} {J. Phys. A}\ }\textbf {\bibinfo {volume} {50}},\ \bibinfo {pages}
  {453001} (\bibinfo {year} {2017})}\BibitemShut {NoStop}%
\bibitem [{\citenamefont {Okuyama}\ and\ \citenamefont
  {Ohzeki}(2018)}]{Okuyama.2018.PRL}%
  \BibitemOpen
  \bibfield  {author} {\bibinfo {author} {\bibfnamefont {M.}~\bibnamefont
  {Okuyama}}\ and\ \bibinfo {author} {\bibfnamefont {M.}~\bibnamefont
  {Ohzeki}},\ }\bibfield  {title} {\bibinfo {title} {{Quantum speed limit is
  not quantum}},\ }\href {https://doi.org/10.1103/PhysRevLett.120.070402}
  {\bibfield  {journal} {\bibinfo  {journal} {Phys. Rev. Lett.}\ }\textbf
  {\bibinfo {volume} {120}},\ \bibinfo {pages} {070402} (\bibinfo {year}
  {2018})}\BibitemShut {NoStop}%
\bibitem [{\citenamefont {Shiraishi}\ \emph {et~al.}(2018)\citenamefont
  {Shiraishi}, \citenamefont {Funo},\ and\ \citenamefont
  {Saito}}]{Shiraishi.2018.PRL}%
  \BibitemOpen
  \bibfield  {author} {\bibinfo {author} {\bibfnamefont {N.}~\bibnamefont
  {Shiraishi}}, \bibinfo {author} {\bibfnamefont {K.}~\bibnamefont {Funo}},\
  and\ \bibinfo {author} {\bibfnamefont {K.}~\bibnamefont {Saito}},\ }\bibfield
   {title} {\bibinfo {title} {{Speed limit for classical stochastic
  processes}},\ }\href {https://doi.org/10.1103/PhysRevLett.121.070601}
  {\bibfield  {journal} {\bibinfo  {journal} {Phys. Rev. Lett.}\ }\textbf
  {\bibinfo {volume} {121}},\ \bibinfo {pages} {070601} (\bibinfo {year}
  {2018})}\BibitemShut {NoStop}%
\bibitem [{\citenamefont {Shanahan}\ \emph {et~al.}(2018)\citenamefont
  {Shanahan}, \citenamefont {Chenu}, \citenamefont {Margolus},\ and\
  \citenamefont {del Campo}}]{Shanahan.2018.PRL}%
  \BibitemOpen
  \bibfield  {author} {\bibinfo {author} {\bibfnamefont {B.}~\bibnamefont
  {Shanahan}}, \bibinfo {author} {\bibfnamefont {A.}~\bibnamefont {Chenu}},
  \bibinfo {author} {\bibfnamefont {N.}~\bibnamefont {Margolus}},\ and\
  \bibinfo {author} {\bibfnamefont {A.}~\bibnamefont {del Campo}},\ }\bibfield
  {title} {\bibinfo {title} {{Quantum speed limits across the
  quantum-to-classical transition}},\ }\href
  {https://doi.org/10.1103/PhysRevLett.120.070401} {\bibfield  {journal}
  {\bibinfo  {journal} {Phys. Rev. Lett.}\ }\textbf {\bibinfo {volume} {120}},\
  \bibinfo {pages} {070401} (\bibinfo {year} {2018})}\BibitemShut {NoStop}%
\bibitem [{\citenamefont {Funo}\ \emph {et~al.}(2019)\citenamefont {Funo},
  \citenamefont {Shiraishi},\ and\ \citenamefont {Saito}}]{Funo.2019.NJP}%
  \BibitemOpen
  \bibfield  {author} {\bibinfo {author} {\bibfnamefont {K.}~\bibnamefont
  {Funo}}, \bibinfo {author} {\bibfnamefont {N.}~\bibnamefont {Shiraishi}},\
  and\ \bibinfo {author} {\bibfnamefont {K.}~\bibnamefont {Saito}},\ }\bibfield
   {title} {\bibinfo {title} {{Speed limit for open quantum systems}},\ }\href
  {https://doi.org/10.1088/1367-2630/aaf9f5} {\bibfield  {journal} {\bibinfo
  {journal} {New J. Phys.}\ }\textbf {\bibinfo {volume} {21}},\ \bibinfo
  {pages} {013006} (\bibinfo {year} {2019})}\BibitemShut {NoStop}%
\bibitem [{\citenamefont {Gomez-Marin}\ \emph {et~al.}(2008)\citenamefont
  {Gomez-Marin}, \citenamefont {Parrondo},\ and\ \citenamefont {Van~den
  Broeck}}]{GomezMarin.2008.PRE}%
  \BibitemOpen
  \bibfield  {author} {\bibinfo {author} {\bibfnamefont {A.}~\bibnamefont
  {Gomez-Marin}}, \bibinfo {author} {\bibfnamefont {J.~M.~R.}\ \bibnamefont
  {Parrondo}},\ and\ \bibinfo {author} {\bibfnamefont {C.}~\bibnamefont
  {Van~den Broeck}},\ }\bibfield  {title} {\bibinfo {title} {{Lower bounds on
  dissipation upon coarse graining}},\ }\href
  {https://doi.org/10.1103/PhysRevE.78.011107} {\bibfield  {journal} {\bibinfo
  {journal} {Phys. Rev. E}\ }\textbf {\bibinfo {volume} {78}},\ \bibinfo
  {pages} {011107} (\bibinfo {year} {2008})}\BibitemShut {NoStop}%
\bibitem [{\citenamefont {Vaikuntanathan}\ and\ \citenamefont
  {Jarzynski}(2009)}]{Vaikuntanathan.2009.EPL}%
  \BibitemOpen
  \bibfield  {author} {\bibinfo {author} {\bibfnamefont {S.}~\bibnamefont
  {Vaikuntanathan}}\ and\ \bibinfo {author} {\bibfnamefont {C.}~\bibnamefont
  {Jarzynski}},\ }\bibfield  {title} {\bibinfo {title} {{Dissipation and lag in
  irreversible processes}},\ }\href
  {https://doi.org/10.1209/0295-5075/87/60005} {\bibfield  {journal} {\bibinfo
  {journal} {Europhys. Lett.}\ }\textbf {\bibinfo {volume} {87}},\ \bibinfo
  {pages} {60005} (\bibinfo {year} {2009})}\BibitemShut {NoStop}%
\bibitem [{\citenamefont {Deffner}\ and\ \citenamefont
  {Lutz}(2010)}]{Deffner.2010.PRL}%
  \BibitemOpen
  \bibfield  {author} {\bibinfo {author} {\bibfnamefont {S.}~\bibnamefont
  {Deffner}}\ and\ \bibinfo {author} {\bibfnamefont {E.}~\bibnamefont {Lutz}},\
  }\bibfield  {title} {\bibinfo {title} {{Generalized Clausius inequality for
  nonequilibrium quantum processes}},\ }\href
  {https://doi.org/10.1103/PhysRevLett.105.170402} {\bibfield  {journal}
  {\bibinfo  {journal} {Phys. Rev. Lett.}\ }\textbf {\bibinfo {volume} {105}},\
  \bibinfo {pages} {170402} (\bibinfo {year} {2010})}\BibitemShut {NoStop}%
\bibitem [{\citenamefont {Aurell}\ \emph {et~al.}(2011)\citenamefont {Aurell},
  \citenamefont {Mej\'{\i}a-Monasterio},\ and\ \citenamefont
  {Muratore-Ginanneschi}}]{Aurell.2011.PRL}%
  \BibitemOpen
  \bibfield  {author} {\bibinfo {author} {\bibfnamefont {E.}~\bibnamefont
  {Aurell}}, \bibinfo {author} {\bibfnamefont {C.}~\bibnamefont
  {Mej\'{\i}a-Monasterio}},\ and\ \bibinfo {author} {\bibfnamefont
  {P.}~\bibnamefont {Muratore-Ginanneschi}},\ }\bibfield  {title} {\bibinfo
  {title} {{Optimal protocols and optimal transport in stochastic
  thermodynamics}},\ }\href {https://doi.org/10.1103/PhysRevLett.106.250601}
  {\bibfield  {journal} {\bibinfo  {journal} {Phys. Rev. Lett.}\ }\textbf
  {\bibinfo {volume} {106}},\ \bibinfo {pages} {250601} (\bibinfo {year}
  {2011})}\BibitemShut {NoStop}%
\bibitem [{\citenamefont {Alhambra}\ and\ \citenamefont
  {Woods}(2017)}]{Alhambra.2017.PRA}%
  \BibitemOpen
  \bibfield  {author} {\bibinfo {author} {\bibfnamefont {A.~M.}\ \bibnamefont
  {Alhambra}}\ and\ \bibinfo {author} {\bibfnamefont {M.~P.}\ \bibnamefont
  {Woods}},\ }\bibfield  {title} {\bibinfo {title} {{Dynamical maps, quantum
  detailed balance, and the Petz recovery map}},\ }\href
  {https://doi.org/10.1103/PhysRevA.96.022118} {\bibfield  {journal} {\bibinfo
  {journal} {Phys. Rev. A}\ }\textbf {\bibinfo {volume} {96}},\ \bibinfo
  {pages} {022118} (\bibinfo {year} {2017})}\BibitemShut {NoStop}%
\bibitem [{\citenamefont {Mancino}\ \emph {et~al.}(2018)\citenamefont
  {Mancino}, \citenamefont {Cavina}, \citenamefont {De~Pasquale}, \citenamefont
  {Sbroscia}, \citenamefont {Booth}, \citenamefont {Roccia}, \citenamefont
  {Gianani}, \citenamefont {Giovannetti},\ and\ \citenamefont
  {Barbieri}}]{Mancino.2018.PRL}%
  \BibitemOpen
  \bibfield  {author} {\bibinfo {author} {\bibfnamefont {L.}~\bibnamefont
  {Mancino}}, \bibinfo {author} {\bibfnamefont {V.}~\bibnamefont {Cavina}},
  \bibinfo {author} {\bibfnamefont {A.}~\bibnamefont {De~Pasquale}}, \bibinfo
  {author} {\bibfnamefont {M.}~\bibnamefont {Sbroscia}}, \bibinfo {author}
  {\bibfnamefont {R.~I.}\ \bibnamefont {Booth}}, \bibinfo {author}
  {\bibfnamefont {E.}~\bibnamefont {Roccia}}, \bibinfo {author} {\bibfnamefont
  {I.}~\bibnamefont {Gianani}}, \bibinfo {author} {\bibfnamefont
  {V.}~\bibnamefont {Giovannetti}},\ and\ \bibinfo {author} {\bibfnamefont
  {M.}~\bibnamefont {Barbieri}},\ }\bibfield  {title} {\bibinfo {title}
  {{Geometrical bounds on irreversibility in open quantum systems}},\ }\href
  {https://doi.org/10.1103/PhysRevLett.121.160602} {\bibfield  {journal}
  {\bibinfo  {journal} {Phys. Rev. Lett.}\ }\textbf {\bibinfo {volume} {121}},\
  \bibinfo {pages} {160602} (\bibinfo {year} {2018})}\BibitemShut {NoStop}%
\bibitem [{\citenamefont {Dechant}\ and\ \citenamefont
  {Sakurai}(2019)}]{Dechant.2019.arxiv}%
  \BibitemOpen
  \bibfield  {author} {\bibinfo {author} {\bibfnamefont {A.}~\bibnamefont
  {Dechant}}\ and\ \bibinfo {author} {\bibfnamefont {Y.}~\bibnamefont
  {Sakurai}},\ }\bibfield  {title} {\bibinfo {title} {{Thermodynamic
  interpretation of Wasserstein distance}},\ }\href
  {https://arxiv.org/abs/1912.08405} {\bibfield  {journal} {\bibinfo  {journal}
  {arXiv preprint arXiv:1912.08405}\ } (\bibinfo {year} {2019})}\BibitemShut
  {NoStop}%
\bibitem [{\citenamefont {Shiraishi}\ and\ \citenamefont
  {Saito}(2019)}]{Shiraishi.2019.PRL}%
  \BibitemOpen
  \bibfield  {author} {\bibinfo {author} {\bibfnamefont {N.}~\bibnamefont
  {Shiraishi}}\ and\ \bibinfo {author} {\bibfnamefont {K.}~\bibnamefont
  {Saito}},\ }\bibfield  {title} {\bibinfo {title} {{Information-theoretical
  bound of the irreversibility in thermal relaxation processes}},\ }\href
  {https://doi.org/10.1103/PhysRevLett.123.110603} {\bibfield  {journal}
  {\bibinfo  {journal} {Phys. Rev. Lett.}\ }\textbf {\bibinfo {volume} {123}},\
  \bibinfo {pages} {110603} (\bibinfo {year} {2019})}\BibitemShut {NoStop}%
\bibitem [{\citenamefont {Abiuso}\ \emph {et~al.}(2020)\citenamefont {Abiuso},
  \citenamefont {Miller}, \citenamefont {Perarnau-Llobet},\ and\ \citenamefont
  {Scandi}}]{Abiuso.2020.E}%
  \BibitemOpen
  \bibfield  {author} {\bibinfo {author} {\bibfnamefont {P.}~\bibnamefont
  {Abiuso}}, \bibinfo {author} {\bibfnamefont {H.~J.~D.}\ \bibnamefont
  {Miller}}, \bibinfo {author} {\bibfnamefont {M.}~\bibnamefont
  {Perarnau-Llobet}},\ and\ \bibinfo {author} {\bibfnamefont {M.}~\bibnamefont
  {Scandi}},\ }\bibfield  {title} {\bibinfo {title} {{Geometric optimisation of
  quantum thermodynamic processes}},\ }\href
  {https://doi.org/10.3390/e22101076} {\bibfield  {journal} {\bibinfo
  {journal} {Entropy}\ }\textbf {\bibinfo {volume} {22}},\ \bibinfo {pages}
  {1076} (\bibinfo {year} {2020})}\BibitemShut {NoStop}%
\bibitem [{\citenamefont {Buscemi}\ \emph {et~al.}(2020)\citenamefont
  {Buscemi}, \citenamefont {Fujiwara}, \citenamefont {Mitsui},\ and\
  \citenamefont {Rotondo}}]{Buscemi.2020.PRA}%
  \BibitemOpen
  \bibfield  {author} {\bibinfo {author} {\bibfnamefont {F.}~\bibnamefont
  {Buscemi}}, \bibinfo {author} {\bibfnamefont {D.}~\bibnamefont {Fujiwara}},
  \bibinfo {author} {\bibfnamefont {N.}~\bibnamefont {Mitsui}},\ and\ \bibinfo
  {author} {\bibfnamefont {M.}~\bibnamefont {Rotondo}},\ }\bibfield  {title}
  {\bibinfo {title} {{Thermodynamic reverse bounds for general open quantum
  processes}},\ }\href {https://doi.org/10.1103/PhysRevA.102.032210} {\bibfield
   {journal} {\bibinfo  {journal} {Phys. Rev. A}\ }\textbf {\bibinfo {volume}
  {102}},\ \bibinfo {pages} {032210} (\bibinfo {year} {2020})}\BibitemShut
  {NoStop}%
\bibitem [{\citenamefont {Campisi}\ and\ \citenamefont
  {Buffoni}(2021)}]{Campisi.2021.PRE}%
  \BibitemOpen
  \bibfield  {author} {\bibinfo {author} {\bibfnamefont {M.}~\bibnamefont
  {Campisi}}\ and\ \bibinfo {author} {\bibfnamefont {L.}~\bibnamefont
  {Buffoni}},\ }\bibfield  {title} {\bibinfo {title} {{Improved bound on
  entropy production in a quantum annealer}},\ }\href
  {https://doi.org/10.1103/PhysRevE.104.L022102} {\bibfield  {journal}
  {\bibinfo  {journal} {Phys. Rev. E}\ }\textbf {\bibinfo {volume} {104}},\
  \bibinfo {pages} {L022102} (\bibinfo {year} {2021})}\BibitemShut {NoStop}%
\bibitem [{\citenamefont {Van~Vu}\ and\ \citenamefont
  {Hasegawa}(2021)}]{Vu.2021.PRL}%
  \BibitemOpen
  \bibfield  {author} {\bibinfo {author} {\bibfnamefont {T.}~\bibnamefont
  {Van~Vu}}\ and\ \bibinfo {author} {\bibfnamefont {Y.}~\bibnamefont
  {Hasegawa}},\ }\bibfield  {title} {\bibinfo {title} {{Geometrical bounds of
  the irreversibility in Markovian systems}},\ }\href
  {https://doi.org/10.1103/PhysRevLett.126.010601} {\bibfield  {journal}
  {\bibinfo  {journal} {Phys. Rev. Lett.}\ }\textbf {\bibinfo {volume} {126}},\
  \bibinfo {pages} {010601} (\bibinfo {year} {2021})}\BibitemShut {NoStop}%
\bibitem [{\citenamefont {Gore}\ \emph {et~al.}(2003)\citenamefont {Gore},
  \citenamefont {Ritort},\ and\ \citenamefont {Bustamante}}]{Gore.2003.PNAS}%
  \BibitemOpen
  \bibfield  {author} {\bibinfo {author} {\bibfnamefont {J.}~\bibnamefont
  {Gore}}, \bibinfo {author} {\bibfnamefont {F.}~\bibnamefont {Ritort}},\ and\
  \bibinfo {author} {\bibfnamefont {C.}~\bibnamefont {Bustamante}},\ }\bibfield
   {title} {\bibinfo {title} {{Bias and error in estimates of equilibrium
  free-energy differences from nonequilibrium measurements}},\ }\href
  {https://doi.org/10.1073/pnas.1635159100} {\bibfield  {journal} {\bibinfo
  {journal} {Proc. Natl. Acad. Sci. U.S.A.}\ }\textbf {\bibinfo {volume}
  {100}},\ \bibinfo {pages} {12564} (\bibinfo {year} {2003})}\BibitemShut
  {NoStop}%
\bibitem [{\citenamefont {Li}\ \emph {et~al.}(2019)\citenamefont {Li},
  \citenamefont {Horowitz}, \citenamefont {Gingrich},\ and\ \citenamefont
  {Fakhri}}]{Li.2019.NC}%
  \BibitemOpen
  \bibfield  {author} {\bibinfo {author} {\bibfnamefont {J.}~\bibnamefont
  {Li}}, \bibinfo {author} {\bibfnamefont {J.~M.}\ \bibnamefont {Horowitz}},
  \bibinfo {author} {\bibfnamefont {T.~R.}\ \bibnamefont {Gingrich}},\ and\
  \bibinfo {author} {\bibfnamefont {N.}~\bibnamefont {Fakhri}},\ }\bibfield
  {title} {\bibinfo {title} {{Quantifying dissipation using fluctuating
  currents}},\ }\href {https://doi.org/10.1038/s41467-019-09631-x} {\bibfield
  {journal} {\bibinfo  {journal} {Nat. Commun.}\ }\textbf {\bibinfo {volume}
  {10}},\ \bibinfo {pages} {1666} (\bibinfo {year} {2019})}\BibitemShut
  {NoStop}%
\bibitem [{\citenamefont {Manikandan}\ \emph {et~al.}(2020)\citenamefont
  {Manikandan}, \citenamefont {Gupta},\ and\ \citenamefont
  {Krishnamurthy}}]{Manikandan.2020.PRL}%
  \BibitemOpen
  \bibfield  {author} {\bibinfo {author} {\bibfnamefont {S.~K.}\ \bibnamefont
  {Manikandan}}, \bibinfo {author} {\bibfnamefont {D.}~\bibnamefont {Gupta}},\
  and\ \bibinfo {author} {\bibfnamefont {S.}~\bibnamefont {Krishnamurthy}},\
  }\bibfield  {title} {\bibinfo {title} {{Inferring entropy production from
  short experiments}},\ }\href {https://doi.org/10.1103/PhysRevLett.124.120603}
  {\bibfield  {journal} {\bibinfo  {journal} {Phys. Rev. Lett.}\ }\textbf
  {\bibinfo {volume} {124}},\ \bibinfo {pages} {120603} (\bibinfo {year}
  {2020})}\BibitemShut {NoStop}%
\bibitem [{\citenamefont {Van~Vu}\ \emph {et~al.}(2020)\citenamefont {Van~Vu},
  \citenamefont {Vo},\ and\ \citenamefont {Hasegawa}}]{Vu.2020.PRE}%
  \BibitemOpen
  \bibfield  {author} {\bibinfo {author} {\bibfnamefont {T.}~\bibnamefont
  {Van~Vu}}, \bibinfo {author} {\bibfnamefont {V.~T.}\ \bibnamefont {Vo}},\
  and\ \bibinfo {author} {\bibfnamefont {Y.}~\bibnamefont {Hasegawa}},\
  }\bibfield  {title} {\bibinfo {title} {{Entropy production estimation with
  optimal current}},\ }\href {https://doi.org/10.1103/PhysRevE.101.042138}
  {\bibfield  {journal} {\bibinfo  {journal} {Phys. Rev. E}\ }\textbf {\bibinfo
  {volume} {101}},\ \bibinfo {pages} {042138} (\bibinfo {year}
  {2020})}\BibitemShut {NoStop}%
\bibitem [{\citenamefont {Otsubo}\ \emph {et~al.}(2020)\citenamefont {Otsubo},
  \citenamefont {Ito}, \citenamefont {Dechant},\ and\ \citenamefont
  {Sagawa}}]{Otsubo.2020.PRE}%
  \BibitemOpen
  \bibfield  {author} {\bibinfo {author} {\bibfnamefont {S.}~\bibnamefont
  {Otsubo}}, \bibinfo {author} {\bibfnamefont {S.}~\bibnamefont {Ito}},
  \bibinfo {author} {\bibfnamefont {A.}~\bibnamefont {Dechant}},\ and\ \bibinfo
  {author} {\bibfnamefont {T.}~\bibnamefont {Sagawa}},\ }\bibfield  {title}
  {\bibinfo {title} {{Estimating entropy production by machine learning of
  short-time fluctuating currents}},\ }\href
  {https://doi.org/10.1103/PhysRevE.101.062106} {\bibfield  {journal} {\bibinfo
   {journal} {Phys. Rev. E}\ }\textbf {\bibinfo {volume} {101}},\ \bibinfo
  {pages} {062106} (\bibinfo {year} {2020})}\BibitemShut {NoStop}%
\bibitem [{\citenamefont {Landauer}(1961)}]{Landauer.1961.JRD}%
  \BibitemOpen
  \bibfield  {author} {\bibinfo {author} {\bibfnamefont {R.}~\bibnamefont
  {Landauer}},\ }\bibfield  {title} {\bibinfo {title} {{Irreversibility and
  heat generation in the computing process}},\ }\href
  {https://doi.org/10.1147/rd.53.0183} {\bibfield  {journal} {\bibinfo
  {journal} {IBM J. Res. Dev.}\ }\textbf {\bibinfo {volume} {5}},\ \bibinfo
  {pages} {183} (\bibinfo {year} {1961})}\BibitemShut {NoStop}%
\bibitem [{\citenamefont {Reeb}\ and\ \citenamefont
  {Wolf}(2014)}]{Reeb.2014.NJP}%
  \BibitemOpen
  \bibfield  {author} {\bibinfo {author} {\bibfnamefont {D.}~\bibnamefont
  {Reeb}}\ and\ \bibinfo {author} {\bibfnamefont {M.~M.}\ \bibnamefont
  {Wolf}},\ }\bibfield  {title} {\bibinfo {title} {{An improved Landauer
  principle with finite-size corrections}},\ }\href
  {https://doi.org/10.1088/1367-2630/16/10/103011} {\bibfield  {journal}
  {\bibinfo  {journal} {New J. Phys.}\ }\textbf {\bibinfo {volume} {16}},\
  \bibinfo {pages} {103011} (\bibinfo {year} {2014})}\BibitemShut {NoStop}%
\bibitem [{\citenamefont {Goold}\ \emph {et~al.}(2015)\citenamefont {Goold},
  \citenamefont {Paternostro},\ and\ \citenamefont {Modi}}]{Goold.2015.PRL}%
  \BibitemOpen
  \bibfield  {author} {\bibinfo {author} {\bibfnamefont {J.}~\bibnamefont
  {Goold}}, \bibinfo {author} {\bibfnamefont {M.}~\bibnamefont {Paternostro}},\
  and\ \bibinfo {author} {\bibfnamefont {K.}~\bibnamefont {Modi}},\ }\bibfield
  {title} {\bibinfo {title} {{Nonequilibrium quantum Landauer principle}},\
  }\href {https://doi.org/10.1103/PhysRevLett.114.060602} {\bibfield  {journal}
  {\bibinfo  {journal} {Phys. Rev. Lett.}\ }\textbf {\bibinfo {volume} {114}},\
  \bibinfo {pages} {060602} (\bibinfo {year} {2015})}\BibitemShut {NoStop}%
\bibitem [{\citenamefont {Timpanaro}\ \emph {et~al.}(2020)\citenamefont
  {Timpanaro}, \citenamefont {Santos},\ and\ \citenamefont
  {Landi}}]{Timpanaro.2020.PRL}%
  \BibitemOpen
  \bibfield  {author} {\bibinfo {author} {\bibfnamefont {A.~M.}\ \bibnamefont
  {Timpanaro}}, \bibinfo {author} {\bibfnamefont {J.~P.}\ \bibnamefont
  {Santos}},\ and\ \bibinfo {author} {\bibfnamefont {G.~T.}\ \bibnamefont
  {Landi}},\ }\bibfield  {title} {\bibinfo {title} {{Landauer's principle at
  zero temperature}},\ }\href {https://doi.org/10.1103/PhysRevLett.124.240601}
  {\bibfield  {journal} {\bibinfo  {journal} {Phys. Rev. Lett.}\ }\textbf
  {\bibinfo {volume} {124}},\ \bibinfo {pages} {240601} (\bibinfo {year}
  {2020})}\BibitemShut {NoStop}%
\bibitem [{\citenamefont {Shiraishi}\ \emph {et~al.}(2016)\citenamefont
  {Shiraishi}, \citenamefont {Saito},\ and\ \citenamefont
  {Tasaki}}]{Shiraishi.2016.PRL}%
  \BibitemOpen
  \bibfield  {author} {\bibinfo {author} {\bibfnamefont {N.}~\bibnamefont
  {Shiraishi}}, \bibinfo {author} {\bibfnamefont {K.}~\bibnamefont {Saito}},\
  and\ \bibinfo {author} {\bibfnamefont {H.}~\bibnamefont {Tasaki}},\
  }\bibfield  {title} {\bibinfo {title} {{Universal trade-off relation between
  power and efficiency for heat engines}},\ }\href
  {https://doi.org/10.1103/PhysRevLett.117.190601} {\bibfield  {journal}
  {\bibinfo  {journal} {Phys. Rev. Lett.}\ }\textbf {\bibinfo {volume} {117}},\
  \bibinfo {pages} {190601} (\bibinfo {year} {2016})}\BibitemShut {NoStop}%
\bibitem [{\citenamefont {Pietzonka}\ and\ \citenamefont
  {Seifert}(2018)}]{Pietzonka.2018.PRL}%
  \BibitemOpen
  \bibfield  {author} {\bibinfo {author} {\bibfnamefont {P.}~\bibnamefont
  {Pietzonka}}\ and\ \bibinfo {author} {\bibfnamefont {U.}~\bibnamefont
  {Seifert}},\ }\bibfield  {title} {\bibinfo {title} {{Universal trade-off
  between power, efficiency, and constancy in steady-state heat engines}},\
  }\href {https://doi.org/10.1103/PhysRevLett.120.190602} {\bibfield  {journal}
  {\bibinfo  {journal} {Phys. Rev. Lett.}\ }\textbf {\bibinfo {volume} {120}},\
  \bibinfo {pages} {190602} (\bibinfo {year} {2018})}\BibitemShut {NoStop}%
\bibitem [{\citenamefont {Pietzonka}\ \emph {et~al.}(2016)\citenamefont
  {Pietzonka}, \citenamefont {Barato},\ and\ \citenamefont
  {Seifert}}]{Pietzonka.2016.JSM}%
  \BibitemOpen
  \bibfield  {author} {\bibinfo {author} {\bibfnamefont {P.}~\bibnamefont
  {Pietzonka}}, \bibinfo {author} {\bibfnamefont {A.~C.}\ \bibnamefont
  {Barato}},\ and\ \bibinfo {author} {\bibfnamefont {U.}~\bibnamefont
  {Seifert}},\ }\bibfield  {title} {\bibinfo {title} {{Universal bound on the
  efficiency of molecular motors}},\ }\href
  {https://doi.org/10.1088/1742-5468/2016/12/124004} {\bibfield  {journal}
  {\bibinfo  {journal} {J. Stat. Mech.: Theory Exp.}\ }\textbf {\bibinfo
  {volume} {2016}},\ \bibinfo {pages} {124004}}\BibitemShut {NoStop}%
\bibitem [{\citenamefont {Saito}\ and\ \citenamefont
  {Dhar}(2016)}]{Saito.2016.EPL}%
  \BibitemOpen
  \bibfield  {author} {\bibinfo {author} {\bibfnamefont {K.}~\bibnamefont
  {Saito}}\ and\ \bibinfo {author} {\bibfnamefont {A.}~\bibnamefont {Dhar}},\
  }\bibfield  {title} {\bibinfo {title} {{Waiting for rare entropic
  fluctuations}},\ }\href {https://doi.org/10.1209/0295-5075/114/50004}
  {\bibfield  {journal} {\bibinfo  {journal} {Europhys. Lett.}\ }\textbf
  {\bibinfo {volume} {114}},\ \bibinfo {pages} {50004} (\bibinfo {year}
  {2016})}\BibitemShut {NoStop}%
\bibitem [{\citenamefont {Neri}\ \emph {et~al.}(2017)\citenamefont {Neri},
  \citenamefont {Rold\'an},\ and\ \citenamefont {J\"ulicher}}]{Neri.2017.PRX}%
  \BibitemOpen
  \bibfield  {author} {\bibinfo {author} {\bibfnamefont {I.}~\bibnamefont
  {Neri}}, \bibinfo {author} {\bibfnamefont {E.}~\bibnamefont {Rold\'an}},\
  and\ \bibinfo {author} {\bibfnamefont {F.}~\bibnamefont {J\"ulicher}},\
  }\bibfield  {title} {\bibinfo {title} {{Statistics of infima and stopping
  times of entropy production and applications to active molecular
  processes}},\ }\href {https://doi.org/10.1103/PhysRevX.7.011019} {\bibfield
  {journal} {\bibinfo  {journal} {Phys. Rev. X}\ }\textbf {\bibinfo {volume}
  {7}},\ \bibinfo {pages} {011019} (\bibinfo {year} {2017})}\BibitemShut
  {NoStop}%
\bibitem [{\citenamefont {Pigolotti}\ \emph {et~al.}(2017)\citenamefont
  {Pigolotti}, \citenamefont {Neri}, \citenamefont {Rold\'an},\ and\
  \citenamefont {J\"ulicher}}]{Pigolotti.2017.PRL}%
  \BibitemOpen
  \bibfield  {author} {\bibinfo {author} {\bibfnamefont {S.}~\bibnamefont
  {Pigolotti}}, \bibinfo {author} {\bibfnamefont {I.}~\bibnamefont {Neri}},
  \bibinfo {author} {\bibfnamefont {E.}~\bibnamefont {Rold\'an}},\ and\
  \bibinfo {author} {\bibfnamefont {F.}~\bibnamefont {J\"ulicher}},\ }\bibfield
   {title} {\bibinfo {title} {{Generic properties of stochastic entropy
  production}},\ }\href {https://doi.org/10.1103/PhysRevLett.119.140604}
  {\bibfield  {journal} {\bibinfo  {journal} {Phys. Rev. Lett.}\ }\textbf
  {\bibinfo {volume} {119}},\ \bibinfo {pages} {140604} (\bibinfo {year}
  {2017})}\BibitemShut {NoStop}%
\bibitem [{\citenamefont {Manzano}\ \emph {et~al.}(2019)\citenamefont
  {Manzano}, \citenamefont {Fazio},\ and\ \citenamefont
  {Rold\'an}}]{Manzano.2019.PRL}%
  \BibitemOpen
  \bibfield  {author} {\bibinfo {author} {\bibfnamefont {G.}~\bibnamefont
  {Manzano}}, \bibinfo {author} {\bibfnamefont {R.}~\bibnamefont {Fazio}},\
  and\ \bibinfo {author} {\bibfnamefont {E.}~\bibnamefont {Rold\'an}},\
  }\bibfield  {title} {\bibinfo {title} {{Quantum martingale theory and entropy
  production}},\ }\href {https://doi.org/10.1103/PhysRevLett.122.220602}
  {\bibfield  {journal} {\bibinfo  {journal} {Phys. Rev. Lett.}\ }\textbf
  {\bibinfo {volume} {122}},\ \bibinfo {pages} {220602} (\bibinfo {year}
  {2019})}\BibitemShut {NoStop}%
\bibitem [{\citenamefont {Falasco}\ and\ \citenamefont
  {Esposito}(2020)}]{Falasco.2020.PRL}%
  \BibitemOpen
  \bibfield  {author} {\bibinfo {author} {\bibfnamefont {G.}~\bibnamefont
  {Falasco}}\ and\ \bibinfo {author} {\bibfnamefont {M.}~\bibnamefont
  {Esposito}},\ }\bibfield  {title} {\bibinfo {title} {{Dissipation-time
  uncertainty relation}},\ }\href
  {https://doi.org/10.1103/PhysRevLett.125.120604} {\bibfield  {journal}
  {\bibinfo  {journal} {Phys. Rev. Lett.}\ }\textbf {\bibinfo {volume} {125}},\
  \bibinfo {pages} {120604} (\bibinfo {year} {2020})}\BibitemShut {NoStop}%
\bibitem [{\citenamefont {Landi}\ and\ \citenamefont
  {Paternostro}(2021)}]{Landi.2021.RMP}%
  \BibitemOpen
  \bibfield  {author} {\bibinfo {author} {\bibfnamefont {G.~T.}\ \bibnamefont
  {Landi}}\ and\ \bibinfo {author} {\bibfnamefont {M.}~\bibnamefont
  {Paternostro}},\ }\bibfield  {title} {\bibinfo {title} {{Irreversible entropy
  production: From classical to quantum}},\ }\href
  {https://doi.org/10.1103/RevModPhys.93.035008} {\bibfield  {journal}
  {\bibinfo  {journal} {Rev. Mod. Phys.}\ }\textbf {\bibinfo {volume} {93}},\
  \bibinfo {pages} {035008} (\bibinfo {year} {2021})}\BibitemShut {NoStop}%
\bibitem [{\citenamefont {Dattagupta}(2012)}]{Dattagupta.2012}%
  \BibitemOpen
  \bibfield  {author} {\bibinfo {author} {\bibfnamefont {S.}~\bibnamefont
  {Dattagupta}},\ }\href@noop {} {\emph {\bibinfo {title} {{Relaxation
  Phenomena in Condensed Matter Physics}}}}\ (\bibinfo  {publisher}
  {Elsevier},\ \bibinfo {address} {New York},\ \bibinfo {year}
  {2012})\BibitemShut {NoStop}%
\bibitem [{\citenamefont {Benenti}\ \emph {et~al.}(2017)\citenamefont
  {Benenti}, \citenamefont {Casati}, \citenamefont {Saito},\ and\ \citenamefont
  {Whitney}}]{Benenti.2017.PR}%
  \BibitemOpen
  \bibfield  {author} {\bibinfo {author} {\bibfnamefont {G.}~\bibnamefont
  {Benenti}}, \bibinfo {author} {\bibfnamefont {G.}~\bibnamefont {Casati}},
  \bibinfo {author} {\bibfnamefont {K.}~\bibnamefont {Saito}},\ and\ \bibinfo
  {author} {\bibfnamefont {R.~S.}\ \bibnamefont {Whitney}},\ }\bibfield
  {title} {\bibinfo {title} {{Fundamental aspects of steady-state conversion of
  heat to work at the nanoscale}},\ }\href
  {https://doi.org/10.1016/j.physrep.2017.05.008} {\bibfield  {journal}
  {\bibinfo  {journal} {Phys. Rep.}\ }\textbf {\bibinfo {volume} {694}},\
  \bibinfo {pages} {1} (\bibinfo {year} {2017})}\BibitemShut {NoStop}%
\bibitem [{\citenamefont {Lindblad}(1976)}]{Lindblad.1976.CMP}%
  \BibitemOpen
  \bibfield  {author} {\bibinfo {author} {\bibfnamefont {G.}~\bibnamefont
  {Lindblad}},\ }\bibfield  {title} {\bibinfo {title} {{On the generators of
  quantum dynamical semigroups}},\ }\href {https://doi.org/10.1007/BF01608499}
  {\bibfield  {journal} {\bibinfo  {journal} {Commun. Math. Phys.}\ }\textbf
  {\bibinfo {volume} {48}},\ \bibinfo {pages} {119} (\bibinfo {year}
  {1976})}\BibitemShut {NoStop}%
\bibitem [{\citenamefont {Manzano}\ \emph {et~al.}(2018)\citenamefont
  {Manzano}, \citenamefont {Horowitz},\ and\ \citenamefont
  {Parrondo}}]{Manzano.2018.PRX}%
  \BibitemOpen
  \bibfield  {author} {\bibinfo {author} {\bibfnamefont {G.}~\bibnamefont
  {Manzano}}, \bibinfo {author} {\bibfnamefont {J.~M.}\ \bibnamefont
  {Horowitz}},\ and\ \bibinfo {author} {\bibfnamefont {J.~M.~R.}\ \bibnamefont
  {Parrondo}},\ }\bibfield  {title} {\bibinfo {title} {{Quantum fluctuation
  theorems for arbitrary environments: Adiabatic and nonadiabatic entropy
  production}},\ }\href {https://doi.org/10.1103/PhysRevX.8.031037} {\bibfield
  {journal} {\bibinfo  {journal} {Phys. Rev. X}\ }\textbf {\bibinfo {volume}
  {8}},\ \bibinfo {pages} {031037} (\bibinfo {year} {2018})}\BibitemShut
  {NoStop}%
\bibitem [{\citenamefont {Trushechkin}(2019)}]{Trushechkin.2019.JMS}%
  \BibitemOpen
  \bibfield  {author} {\bibinfo {author} {\bibfnamefont {A.~S.}\ \bibnamefont
  {Trushechkin}},\ }\bibfield  {title} {\bibinfo {title} {{On the general
  definition of the production of entropy in open Markov quantum systems}},\
  }\href {https://doi.org/10.1007/s10958-019-04417-4} {\bibfield  {journal}
  {\bibinfo  {journal} {J. Math. Sci.}\ }\textbf {\bibinfo {volume} {241}},\
  \bibinfo {pages} {191} (\bibinfo {year} {2019})}\BibitemShut {NoStop}%
\bibitem [{Sup()}]{Supp.PhysRev}%
  \BibitemOpen
  \href@noop {} {}\bibinfo {note} {See Supplemental Material at [URL] for
  details of calculations, which includes
  Ref.~\cite{Zhang.2020.PRR}.}\BibitemShut {Stop}%
\bibitem [{\citenamefont {Erker}\ \emph {et~al.}(2017)\citenamefont {Erker},
  \citenamefont {Mitchison}, \citenamefont {Silva}, \citenamefont {Woods},
  \citenamefont {Brunner},\ and\ \citenamefont {Huber}}]{Erker.2017.PRX}%
  \BibitemOpen
  \bibfield  {author} {\bibinfo {author} {\bibfnamefont {P.}~\bibnamefont
  {Erker}}, \bibinfo {author} {\bibfnamefont {M.~T.}\ \bibnamefont
  {Mitchison}}, \bibinfo {author} {\bibfnamefont {R.}~\bibnamefont {Silva}},
  \bibinfo {author} {\bibfnamefont {M.~P.}\ \bibnamefont {Woods}}, \bibinfo
  {author} {\bibfnamefont {N.}~\bibnamefont {Brunner}},\ and\ \bibinfo {author}
  {\bibfnamefont {M.}~\bibnamefont {Huber}},\ }\bibfield  {title} {\bibinfo
  {title} {{Autonomous quantum clocks: Does thermodynamics limit our ability to
  measure time?}},\ }\href {https://doi.org/10.1103/PhysRevX.7.031022}
  {\bibfield  {journal} {\bibinfo  {journal} {Phys. Rev. X}\ }\textbf {\bibinfo
  {volume} {7}},\ \bibinfo {pages} {031022} (\bibinfo {year}
  {2017})}\BibitemShut {NoStop}%
\bibitem [{\citenamefont {Mitchison}(2019)}]{Mitchison.2019.CP}%
  \BibitemOpen
  \bibfield  {author} {\bibinfo {author} {\bibfnamefont {M.~T.}\ \bibnamefont
  {Mitchison}},\ }\bibfield  {title} {\bibinfo {title} {{Quantum thermal
  absorption machines: refrigerators, engines and clocks}},\ }\href
  {https://doi.org/10.1080/00107514.2019.1631555} {\bibfield  {journal}
  {\bibinfo  {journal} {Contemp. Phys.}\ }\textbf {\bibinfo {volume} {60}},\
  \bibinfo {pages} {164} (\bibinfo {year} {2019})}\BibitemShut {NoStop}%
\bibitem [{\citenamefont {Breuer}\ and\ \citenamefont
  {Petruccione}(2002)}]{Breuer.2002}%
  \BibitemOpen
  \bibfield  {author} {\bibinfo {author} {\bibfnamefont {H.-P.}\ \bibnamefont
  {Breuer}}\ and\ \bibinfo {author} {\bibfnamefont {F.}~\bibnamefont
  {Petruccione}},\ }\href@noop {} {\emph {\bibinfo {title} {{The Theory of Open
  Quantum Systems}}}}\ (\bibinfo  {publisher} {Oxford University Press},\
  \bibinfo {address} {New York},\ \bibinfo {year} {2002})\BibitemShut {NoStop}%
\bibitem [{\citenamefont {Horowitz}(2012)}]{Horowitz.2012.PRE}%
  \BibitemOpen
  \bibfield  {author} {\bibinfo {author} {\bibfnamefont {J.~M.}\ \bibnamefont
  {Horowitz}},\ }\bibfield  {title} {\bibinfo {title} {{Quantum-trajectory
  approach to the stochastic thermodynamics of a forced harmonic oscillator}},\
  }\href {https://doi.org/10.1103/PhysRevE.85.031110} {\bibfield  {journal}
  {\bibinfo  {journal} {Phys. Rev. E}\ }\textbf {\bibinfo {volume} {85}},\
  \bibinfo {pages} {031110} (\bibinfo {year} {2012})}\BibitemShut {NoStop}%
\bibitem [{\citenamefont {Horowitz}\ and\ \citenamefont
  {Parrondo}(2013)}]{Horowitz.2013.NJP}%
  \BibitemOpen
  \bibfield  {author} {\bibinfo {author} {\bibfnamefont {J.~M.}\ \bibnamefont
  {Horowitz}}\ and\ \bibinfo {author} {\bibfnamefont {J.~M.~R.}\ \bibnamefont
  {Parrondo}},\ }\bibfield  {title} {\bibinfo {title} {{Entropy production
  along nonequilibrium quantum jump trajectories}},\ }\href
  {https://doi.org/10.1088/1367-2630/15/8/085028} {\bibfield  {journal}
  {\bibinfo  {journal} {New J. Phys.}\ }\textbf {\bibinfo {volume} {15}},\
  \bibinfo {pages} {085028} (\bibinfo {year} {2013})}\BibitemShut {NoStop}%
\bibitem [{fnt({\natexlab{a}})}]{fnt3}%
  \BibitemOpen
  \href@noop {} {}The first term in the right-hand side of
  Eq.~\eqref{eq:stoch.ent.prod} represents the stochastic change in the von
  Neumann entropy, whereas the second term corresponds to the stochastic
  entropy production of the reservoirs.\BibitemShut {Stop}%
\bibitem [{fnt({\natexlab{b}})}]{fnt1}%
  \BibitemOpen
  \href@noop {} {}Equation \eqref{eq:ent.prod.bound.1} can also be derived
  using an integral fluctuation theorem \cite{Supp.PhysRev}.\BibitemShut
  {Stop}%
\bibitem [{\citenamefont {Brandner}\ and\ \citenamefont
  {Saito}(2020)}]{Brandner.2020.PRL}%
  \BibitemOpen
  \bibfield  {author} {\bibinfo {author} {\bibfnamefont {K.}~\bibnamefont
  {Brandner}}\ and\ \bibinfo {author} {\bibfnamefont {K.}~\bibnamefont
  {Saito}},\ }\bibfield  {title} {\bibinfo {title} {{Thermodynamic geometry of
  microscopic heat engines}},\ }\href
  {https://doi.org/10.1103/PhysRevLett.124.040602} {\bibfield  {journal}
  {\bibinfo  {journal} {Phys. Rev. Lett.}\ }\textbf {\bibinfo {volume} {124}},\
  \bibinfo {pages} {040602} (\bibinfo {year} {2020})}\BibitemShut {NoStop}%
\bibitem [{\citenamefont {Zhang}\ \emph {et~al.}(2020)\citenamefont {Zhang},
  \citenamefont {Rehan}, \citenamefont {Li}, \citenamefont {Li}, \citenamefont
  {Chen}, \citenamefont {Su}, \citenamefont {Yan}, \citenamefont {Zhou},\ and\
  \citenamefont {Feng}}]{Zhang.2020.PRR}%
  \BibitemOpen
  \bibfield  {author} {\bibinfo {author} {\bibfnamefont {J.~W.}\ \bibnamefont
  {Zhang}}, \bibinfo {author} {\bibfnamefont {K.}~\bibnamefont {Rehan}},
  \bibinfo {author} {\bibfnamefont {M.}~\bibnamefont {Li}}, \bibinfo {author}
  {\bibfnamefont {J.~C.}\ \bibnamefont {Li}}, \bibinfo {author} {\bibfnamefont
  {L.}~\bibnamefont {Chen}}, \bibinfo {author} {\bibfnamefont {S.-L.}\
  \bibnamefont {Su}}, \bibinfo {author} {\bibfnamefont {L.-L.}\ \bibnamefont
  {Yan}}, \bibinfo {author} {\bibfnamefont {F.}~\bibnamefont {Zhou}},\ and\
  \bibinfo {author} {\bibfnamefont {M.}~\bibnamefont {Feng}},\ }\bibfield
  {title} {\bibinfo {title} {{Single-atom verification of the
  information-theoretical bound of irreversibility at the quantum level}},\
  }\href {https://doi.org/10.1103/PhysRevResearch.2.033082} {\bibfield
  {journal} {\bibinfo  {journal} {Phys. Rev. Research}\ }\textbf {\bibinfo
  {volume} {2}},\ \bibinfo {pages} {033082} (\bibinfo {year}
  {2020})}\BibitemShut {NoStop}%
\end{thebibliography}
\end{document}

% --- supplement: supp.tex ---

\title{Supplemental Material for \\ ``Lower Bound on Irreversibility in Thermal Relaxation of Open Quantum Systems''}

\author{Tan Van Vu}
\email{tan@biom.t.u-tokyo.ac.jp}
\affiliation{Department of Information and Communication Engineering, Graduate
	School of Information Science and Technology, The University of Tokyo,
	Tokyo 113-8656, Japan}

\author{Yoshihiko Hasegawa}
\email{hasegawa@biom.t.u-tokyo.ac.jp}
\affiliation{Department of Information and Communication Engineering, Graduate
	School of Information Science and Technology, The University of Tokyo,
	Tokyo 113-8656, Japan}

\pacs{}
\maketitle

This supplemental material describes the details of calculations introduced in the main text. 
The equations and figure numbers are prefixed with S [e.g., Eq.~(S1) or Fig.~S1]. 
Numbers without this prefix [e.g., Eq.~(1) or Fig.~1] refer to items in the main text.

%\tableofcontents

%\appendix

\section{Entropy production in time-driven systems cannot be bounded by a universal metric}
Here we prove that irreversible entropy production cannot be bounded from below by a system-independent distance between the initial and final states.
It is enough to prove for the classical case; therefore, we consider time-dependent Markov jump processes described by the master equation
\begin{equation}
\pp_t p_n(t)=\sum_{m(\neq n)}[R_{nm}(t)p_m(t)-R_{mn}(t)p_n(t)],
\end{equation}
where $R_{mn}(t)\ge 0$ denotes the time-dependent transition rate from state $n$ to state $m$.
Let $\mbm{p}(t)\coloneqq[p_1(t),\dots,p_N(t)]^\top$ be the time-dependent probability distribution.
With proof by contradiction, we assume that there exists a metric $\ell$ that is independent of the system parameters and satisfies
\begin{equation}
\Sigma_\tau\ge\ell[\mbm{p}(0),\mbm{p}(\tau)]
\end{equation}
for all Markov jump processes.
The irreversible entropy production can be written as
\begin{equation}
\Sigma_\tau=\frac{1}{2}\int_0^\tau\sum_{m\neq n}[R_{mn}(t)p_n(t)-R_{nm}(t)p_m(t)]\ln\frac{R_{mn}(t)p_n(t)}{R_{nm}(t)p_m(t)}dt.
\end{equation}
We consider the nontrivial case: $\ell[\mbm{p}(0),\mbm{p}(\tau)]>0$.
Let $\delta>0$ be an arbitrary number that satisfies $\ell[\mbm{p}(0),\mbm{p}(\tau)]>\delta$.
We define auxiliary dynamics with transition rates:
\begin{equation}
\tilde{R}_{mn}(t)\coloneqq R_{mn}(t)+\frac{\alpha}{p_n(t)},
\end{equation}
where $\alpha>0$ is a positive number that will be determined later.
The initial distribution of the auxiliary dynamics is set to $\tilde{\mbm{p}}(0)=\mbm{p}(0)$.
It can then be proved that the probability distribution of the auxiliary dynamics is the same as the original for all times, i.e., $\tilde{\mbm{p}}(t)=\mbm{p}(t)\,\forall t\ge 0$.
Specifically, we show that if $\tilde{\mbm{p}}(t)=\mbm{p}(t)$, then $\tilde{\mbm{p}}(t+\Delta t)=\mbm{p}(t+\Delta t)$ for arbitrarily small $\Delta t>0$.
From the master equation, we have
\begin{subequations}
\begin{align}
\tilde{p}_n(t+\Delta t)-\tilde{p}_n(t)&=\Delta t\sum_{m(\neq n)}\bras{\tilde{R}_{nm}(t)\tilde{p}_m(t)-\tilde{R}_{mn}(t)\tilde{p}_n(t)}\\
&=\Delta t\sum_{m(\neq n)}\bras{\tilde{R}_{nm}(t)p_m(t)-\tilde{R}_{mn}(t)p_n(t)}\\
&=\Delta t\sum_{m(\neq n)}\bras{R_{nm}(t)p_m(t)-R_{mn}(t)p_n(t)}\\
&=p_n(t+\Delta t)-p_n(t),
\end{align}
\end{subequations}
which implies that $\tilde{p}_n(t+\Delta t)=p_n(t+\Delta t)$ for all $n$.
The original and auxiliary dynamics thus have the same distributions for all times $0\le t\le\tau$.
Therefore, the entropy production $\tilde{\Sigma}_\tau$ in the auxiliary dynamics is also bounded from below by the distance $\ell$, as
\begin{equation}
\tilde{\Sigma}_\tau\ge\ell[\mbm{p}(0),\mbm{p}(\tau)]>\delta.\label{eq:eps.bound.ent}
\end{equation}
However,
\begin{subequations}
\begin{align}
\tilde{\Sigma}_\tau &=\frac{1}{2}\int_0^\tau\sum_{m\neq n}[\tilde{R}_{mn}(t)\tilde{p}_n(t)-\tilde{R}_{nm}(t)\tilde{p}_m(t)]\ln\frac{\tilde{R}_{mn}(t)\tilde{p}_n(t)}{\tilde{R}_{nm}(t)\tilde{p}_m(t)}dt\\
&=\frac{1}{2}\int_0^\tau\sum_{m\neq n}[R_{mn}(t)p_n(t)-R_{nm}(t)p_m(t)]\ln\frac{R_{mn}(t)p_n(t)+\alpha}{R_{nm}(t)p_m(t)+\alpha}dt\\
&\le\frac{1}{2\alpha}\int_0^\tau\sum_{m\neq n}[R_{mn}(t)p_n(t)-R_{nm}(t)p_m(t)]^2dt.\label{eq:ent.upper}
\end{align}
\end{subequations}
To obtain Eq.~\eqref{eq:ent.upper}, we have applied the inequality
\begin{equation}
(x-y)\ln\frac{x+z}{y+z}\le\frac{(x-y)^2}{z}
\end{equation}
for $x,y,z\ge 0$.
Now, we select a sufficiently large value of $\alpha$, such that
\begin{equation}
\frac{1}{2\alpha}\int_0^\tau\sum_{m\neq n}[R_{mn}(t)p_n(t)-R_{nm}(t)p_m(t)]^2dt<\delta\Leftrightarrow\alpha>\frac{1}{2\delta}\int_0^\tau\sum_{m\neq n}[R_{mn}(t)p_n(t)-R_{nm}(t)p_m(t)]^2dt.
\end{equation}
The following inequality is then obtained from Eq.~\eqref{eq:ent.upper}:
\begin{equation}
\tilde{\Sigma}_\tau<\delta.
\end{equation}
This is inconsistent with Eq.~\eqref{eq:eps.bound.ent}, which completes our proof.

\section{Derivation of Eq.~(\IntLindbladEq) in the main text}
For any operators $A$ and $B$, one can prove that
\begin{equation}
e^{-\lambda A}Be^{\lambda A}=\sum_{n=0}^{\infty}\frac{(-\lambda)^n}{n!}[A,B]_n,
\end{equation}
where the nested commutator is recursively defined as $[A,B]_n=[A,[A,B]_{n-1}]$ and $[A,B]_0=B$.
From the relation $[L_k,H]=\omega_kL_k$, we can readily obtain
\begin{equation}
[H,L_k]_n=(-\omega_k)^nL_k.
\end{equation}
Consequently,
\begin{equation}
e^{-\lambda H}L_ke^{\lambda H}=\sum_{n=0}^{\infty}\frac{(-\lambda)^n(-\omega_k)^n}{n!}L_k=e^{\lambda\omega_k}L_k\Rightarrow e^{-\lambda H}L_k=e^{\lambda\omega_k}L_ke^{-\lambda H}.
\end{equation}
The time derivative of $\rho_{I}(t)=e^{iHt}\rho(t)e^{-iHt}$ is then taken to derive Eq.~(\IntLindbladEq) as
\begin{subequations}
\begin{align}
\pp_t{\rho}_{I}(t)&=e^{iHt}i[H,\rho(t)]e^{-iHt}+e^{iHt}\pp_t{\rho}(t)e^{-iHt}\\
&=\sum_k e^{iHt}\bras{L_k\rho(t)L_k^\dagger-\frac{1}{2}\{L_k^\dagger L_k,\rho(t)\}}e^{-iHt}\\
&=\sum_k \bras{L_ke^{iHt}\rho(t)e^{-iHt}L_k^\dagger-\frac{1}{2}\{L_k^\dagger L_k,e^{iHt}\rho(t)e^{-iHt}\}}\\
&=\sum_k \bras{L_k\rho_{I}(t)L_k^\dagger-\frac{1}{2}\{L_k^\dagger L_k,\rho_{I}(t)\}}\\
&=\sum_k\mca{D}_k[\rho_{I}(t)].
\end{align}
\end{subequations}
Here, we have used the following relations:
\begin{align}
e^{iHt}L_k&=e^{-i\omega_kt}L_ke^{iHt},\\
L_k^\dagger e^{-iHt}&=e^{i\omega_kt}e^{-iHt}L_k^\dagger,\\
[L_k^\dagger L_k,H]&=0.
\end{align}

\section{Quantum trajectories and entropy production}
Here we show that the irreversible entropy production $\Sigma_\tau$ can be mapped to the stochastic entropy production on the level of individual trajectories.
First, Eq.~(\IntLindbladEq) can be rewritten as
\begin{equation}
\pp_t\rho_{I}(t)=-i[H_{\rm eff}\rho_{I}(t)-\rho_{I}(t)H_{\rm eff}^\dagger]+\sum_kL_k\rho_{I}(t)L_k^\dagger,
\end{equation}
where $H_{\rm eff}\coloneqq -(i/2)\sum_kL_k^\dagger L_k$ is the skew-Hermitian effective Hamiltonian (i.e., $H_{\rm eff}^\dagger=-H_{\rm eff}$).
In the forward process, the evolution of the pure state between jumps is described by the deterministic equation
\begin{equation}
\frac{d}{dt}\Kt{\psi(t)}=\mca{S}[\Kt{\psi(t)}].\label{eq:det.evol.eq}
\end{equation}
The formal solution of Eq.~\eqref{eq:det.evol.eq} gives the state at time $t~(>s)$ as
\begin{equation}
\Kt{\psi(s)}\mapsto\Kt{\psi(t)}=\frac{U_{\rm eff}(t,s)\Kt{\psi(s)}}{\sqrt{\avg{U_{\rm eff}(t,s)^\dagger U_{\rm eff}(t,s)}_{\psi(s)}}},
\end{equation}
where the effective time-evolution operator $U_{\rm eff}(t,s)=e^{-iH_{\rm eff}(t-s)}$ is the solution of the differential equation
\begin{equation}
\pp_tU_{\rm eff}(t,s)=-iH_{\rm eff}U_{\rm eff}(t,s),
\end{equation}
with the initial condition $U_{\rm eff}(s,s)=\mbb{I}$.
The smooth evolution of the pure state is interrupted by sudden jumps, which alter the state as
\begin{equation}
\Kt{\psi(t)}\mapsto\frac{L_k\Kt{\psi(t)}}{\sqrt{\avg{L_k^\dagger L_k}_{\psi(t)}}}.
\end{equation}
This discontinuous change is induced by the jump operator $\mca{J}_k\coloneqq L_k\sqrt{dt}$.
Given the stochastic trajectory $\Gamma=\{n,(k_1,t_1),\dots,(k_J,t_J),m\}$, the probability to observe $\Gamma$ is encoded into the unnormalized state
\begin{subequations}
\begin{align}
\Kt{\psi_\tau(\Gamma)}&=\Mat{m}U_{\rm eff}(\tau,t_J)\mca{J}_{k_J}\dots\mca{J}_{k_1}U_{\rm eff}(t_1,0)\Kt{n}\\
&=\Mat{m}\mca{P}(\Gamma)\Kt{n},
\end{align}
\end{subequations}
where $\mca{P}(\Gamma)\coloneqq U_{\rm eff}(\tau,t_J)\mca{J}_{k_J}\dots\mca{J}_{k_1}U_{\rm eff}(t_1,0)$ is the forward propagator.
In other words, given that the initial state is $\Kt{n}$, the probability of observing $\Gamma$ is the norm of $\Kt{\psi_\tau(\Gamma)}$,
\begin{equation}
P(\Gamma|n)=\Prod{\psi_\tau(\Gamma)}=|\davgObs{m}{\mca{P}(\Gamma)}{n}|^2.
\end{equation}
With the backward process defined as in the main text, the effective Hamiltonian is $\tilde{H}_{\rm eff}=-(i/2)\sum_k\tilde{L}_k^\dagger\tilde{L}_k=\Theta (i/2)\sum_kL_k^\dagger L_k\Theta^\dagger=-\Theta H_{\rm eff}\Theta^\dagger$ and the corresponding effective time-evolution operator is $\tilde{U}_{\rm eff}(t,s)=e^{-i\tilde{H}_{\rm eff}(t-s)}$.
In addition, the backward jump operators are
\begin{equation}
\tilde{\mca{J}}_k\coloneqq\tilde{L}_k\sqrt{dt}=\Theta\mca{J}_{k'}\Theta^\dagger.
\end{equation}
Analogously, the probability of observing the time-reversed trajectory $\tilde{\Gamma}=\{m,(k_J,\tau-t_J),\dots,(k_1,\tau-t_1),n\}$ in the backward process is encoded into the unnormalized state
\begin{subequations}
\begin{align}
\Kt{\tilde{\psi}_\tau(\tilde{\Gamma})}&=\Mat{\tilde{n}}\tilde{U}_{\rm eff}(\tau,\tau-t_1)\tilde{\mca{J}}_{k_1}\dots\tilde{\mca{J}}_{k_J}\tilde{U}_{\rm eff}(\tau-t_J,0)\Kt{\tilde{m}}\\
&=\Mat{\tilde{n}}\tilde{\mca{P}}(\tilde{\Gamma})\Kt{\tilde{m}},
\end{align}
\end{subequations}
where $\tilde{\mca{P}}(\tilde{\Gamma})\coloneqq \tilde{U}_{\rm eff}(\tau,\tau-t_1)\tilde{\mca{J}}_{k_1}\dots\tilde{\mca{J}}_{k_J}\tilde{U}_{\rm eff}(\tau-t_J,0)$ is the backward propagator.
The probability to observe $\tilde{\Gamma}$ given that the initial state is $\Kt{\tilde{m}}$ is the norm of $\Kt{\tilde{\psi}_\tau(\tilde{\Gamma})}$,
\begin{equation}
\tilde{P}(\tilde{\Gamma}|\tilde{m})=\Prod{\tilde{\psi}_\tau(\tilde{\Gamma})}=|\davgObs{\tilde{n}}{\tilde{\mca{P}}(\tilde{\Gamma})}{\tilde{m}}|^2.
\end{equation}
It can be confirmed that
\begin{equation}
\Theta^\dagger\tilde{U}_{\rm eff}^\dagger(\tau-s,\tau-t)\Theta=\Theta^\dagger e^{i\tilde{H}_{\rm eff}^\dagger(t-s)}\Theta=\Theta^\dagger e^{i\Theta H_{\rm eff}\Theta^\dagger(t-s)}\Theta= e^{-iH_{\rm eff}(t-s)}=U_{\rm eff}(t,s).
\end{equation}
Consequently, the propagators in the forward and backward processes can be related as
\begin{equation}
\mca{P}(\Gamma)=\Theta^\dagger\tilde{\mca{P}}^\dagger(\tilde{\Gamma})\Theta e^{\sum_{j=1}^J\Delta s_{\rm env}^{k_j}/2}\Rightarrow\davgObs{m}{\mca{P}(\Gamma)}{n}=\davgObs{\tilde{n}}{\tilde{\mca{P}}(\tilde{\Gamma})}{\tilde{m}}^*e^{\sum_{j=1}^J\Delta s_{\rm env}^{k_j}/2}.
\end{equation}
From the relations $P(\Gamma)=P(\Gamma|n)p_n$ and $\tilde{P}(\tilde{\Gamma})=\tilde{P}(\tilde{\Gamma}|\tilde{m})p_m$, we immediately have
\begin{equation}
\Delta s_{\rm tot}=\ln\frac{P(\Gamma)}{\tilde{P}(\tilde{\Gamma})}=\ln\frac{p_n}{p_m}+\sum_{j=1}^J\Delta s_{\rm env}^{k_j}.
\end{equation}

Finally, we show that the ensemble average of $\Delta s_{\rm tot}$ is equal to the irreversible entropy production $\Sigma_\tau$.
Noting that $\rho_{I}(t)=e^{iHt}\rho(t)e^{-iHt}$ and the von Neumann entropy is invariant under unitary transforms, we have
\begin{subequations}
\begin{align}
\avg{\Delta s_{\rm tot}}&=\avg{-\ln p_m+\ln p_n+\sum_{j=1}^J\Delta s_{\rm env}^{k_j}}\\
&=-\sum_mp_m\ln p_m+\sum_np_n\ln p_n+\int_0^\tau\sum_k\Tr{L_k^\dagger L_k\rho_{I}(t)}\Delta s_{\rm env}^{k}dt\\
&=-\Tr{\rho_{I}(\tau)\ln\rho_{I}(\tau)}+\Tr{\rho_{I}(0)\ln\rho_{I}(0)}+\Delta S_{\rm env}\\
&=-\Tr{\rho(\tau)\ln\rho(\tau)}+\Tr{\rho(0)\ln\rho(0)}+\Delta S_{\rm env}\\
&=\Delta S_{\rm sys}+\Delta S_{\rm env}\\
&=\Sigma_\tau.
\end{align}
\end{subequations}

\section{Derivation of Eq.~(\KLbound) based on an integral fluctuation theorem}
Here we derive Eq.~(\KLbound) using a fluctuation theorem.
To this end, we prove the following equality:
\begin{equation}
\avgl{\exp\bras{-\Delta s_{\rm tot}+\ln\frac{p_n}{\tilde{p}_n}}}=1,\label{eq:FTlike}
\end{equation}
where the average is taken with respect to the distribution $P(\Gamma)$.
Inserting $\Delta s_{\rm tot}=\ln[P(\Gamma)/\tilde{P}(\tilde{\Gamma})]$, we have
\begin{subequations}
\begin{align}
\avgl{\exp\bras{-\Delta s_{\rm tot}+\ln\frac{p_n}{\tilde{p}_n}}}&=\avgl{\exp\bras{-\ln\frac{P(\Gamma|n)p_n}{\tilde{P}(\tilde{\Gamma}|\tilde{m})p_m}+\ln\frac{p_n}{\tilde{p}_n}}}\\
&=\sum_\Gamma P(\Gamma|n)p_n\frac{\tilde{P}(\tilde{\Gamma}|\tilde{m})p_m}{P(\Gamma|n)\tilde{p}_n}\\
&=\sum_n \frac{p_n}{\tilde{p}_n}\sum_{\Gamma|n}\tilde{P}(\tilde{\Gamma}|\tilde{m})p_m\\
&=\sum_n \frac{p_n}{\tilde{p}_n}\tilde{p}_n \label{eq:FT.deriv.tmp1}\\
&=\sum_n p_n\\
&=1.
\end{align}
\end{subequations}
To obtain Eq.~\eqref{eq:FT.deriv.tmp1}, we have used $\sum_{\Gamma|n}\tilde{P}(\tilde{\Gamma}|\tilde{m})p_m=\tilde{p}_n$.
Jensen's inequality is then applied to Eq.~\eqref{eq:FTlike} to obtain
\begin{equation}
\Sigma_\tau=\avg{\Delta s_{\rm tot}}\ge\avgl{\ln\frac{p_n}{\tilde{p}_n}}=D(p_n||\tilde{p}_n).
\end{equation}

\section{Derivation of Eq.~(\EigenBound) in the main text}
In the single-reservoir case, irreversible entropy production can be explicitly written as $\Sigma_\tau=S[\rho(0)||\pi]-S[\rho(\tau)||\pi]\ge 0$.
It should be noted that the straight-forward extension of the classical bound, $\Sigma_\tau\ge S[\rho(0)||\rho(\tau)]$, does not generally hold.
Its violation can be intuitively confirmed in the limit of vanishing coupling to the reservoir \cite{Vu.2021.PRL} and has also been experimentally verified for a single-atom system \cite{Zhang.2020.PRR}.
The interaction-picture Lindblad master equation has no unitary part; therefore, it can be proved that \cite{Alhambra.2017.PRA}
\begin{equation}
S[\rho_I(0)||\pi]-S[\rho_I(\tau/2)||\pi]\ge S[\rho_I(0)||\rho_{I}(\tau)].\label{eq:equ.relax.rel.bound.1}
\end{equation}
Due to the invariance of the relative entropy under unitary transforms, the term in the left-hand side of Eq.~\eqref{eq:equ.relax.rel.bound.1} equals $\Sigma_{\tau/2}$.
Since $\Sigma_\tau\ge\Sigma_{\tau/2}$ and $S[\rho_I(0)||\rho_{I}(\tau)]=S[U_\tau\rho(0)U_\tau^\dagger||\rho(\tau)]$, a bound on the entropy production can be obtained, $\Sigma_\tau\ge S[U_\tau\rho(0)U_\tau^\dagger||\rho(\tau)]$.
The dependence of this lower bound on the Hamiltonian can be eliminated by taking the minimum over all unitary operators, which yields the following bound:
\begin{equation}
\Sigma_\tau\ge\min_{V^\dagger V=\mbb{I}}S[V\rho(0)V^\dagger||\rho(\tau)].\label{eq:H.free.bound}
\end{equation}
We prove below that the variational term in the right-hand side of Eq.~\eqref{eq:H.free.bound} is exactly $S_{\mbb{E}}[\rho(0)||\rho(\tau)]$, from which Eq.~(\EigenBound) is obtained.

In what follows, we prove that $S_{\mbb{E}}(\rho||\sigma)=\min_{V^\dagger V=\mbb{I}}S(V\rho V^\dagger||\sigma)$.
Note that $S_{\mbb{E}}(\rho||\sigma)=D(a_n||b_n)$, where $\{a_n\}_n$ and $\{b_n\}_n$ are the increasing eigenvalues of $\rho$ and $\sigma$.
To this end, we will show that $\min_{V^\dagger V=\mbb{I}}S(V\rho V^\dagger||\sigma)\le D(a_n||b_n)$ and $\min_{V^\dagger V=\mbb{I}}S(V\rho V^\dagger||\sigma)\ge D(a_n||b_n)$.
First, we prove the former.
Let $\rho=\sum_n a_n\Mat{a_n}$ and $\sigma=\sum_n b_n\Mat{b_n}$ be the spectral decompositions of $\rho$ and $\sigma$, respectively.
Note that the matrix $V=\sum_{n}\dMat{b_n}{a_n}$ is a unitary matrix, by which we readily obtain
\begin{equation}
\min_{V^\dagger V=\mbb{I}}S(V\rho V^\dagger||\sigma)\le S(V\rho V^\dagger||\sigma)=S(\sum_n a_n\Mat{b_n}~||~\sum_n b_n\Mat{b_n})=D(a_n||b_n).
\end{equation}
Next, we prove the latter.
For an arbitrary unitary matrix $V$, we have
\begin{subequations}
\begin{align}
S(V\rho V^\dagger||\sigma)&=\Tr{\rho\ln\rho}-\Tr{V\rho V^\dagger\ln\sigma}\\
&=\sum_n a_n\ln a_n-\sum_{n,m}a_n\ln b_m|\davgObs{b_m}{V}{a_n}|^2\\
&=\sum_n a_n\ln a_n-\sum_{n,m}c_{nm}a_n\ln b_m,\label{eq:rel.decomp}
\end{align}
\end{subequations}
where $c_{nm}\coloneqq |\davgObs{b_m}{V}{a_n}|^2\ge 0$.
Note that $\sum_m c_{nm}=\sum_n c_{nm}=1$.

Before proceeding further, we prove the following Lemma.
\begin{lemma}
Let $0\le a_1\le\dots\le a_N$ and $0\le b_1\le\dots\le b_N$ be two arrays of nonnegative numbers and $C=[c_{nm}]\in\mbb{R}^{N\times N}$ be a doubly stochastic matrix (i.e., $c_{nm}\ge 0$ and $\sum_mc_{nm}=\sum_nc_{nm}=1$).
Then, 
\begin{equation}
\sum_{n,m}c_{nm}a_n\ln b_m\le\sum_na_n\ln b_n.
\end{equation}
\end{lemma}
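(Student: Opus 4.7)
The plan is to combine the Birkhoff--von Neumann theorem with the classical rearrangement inequality. I first observe that the expression $f(C) \coloneqq \sum_{n,m} c_{nm} a_n \ln b_m$ is a linear functional of $C$ on the polytope of $N\times N$ doubly stochastic matrices, whose extreme points are exactly the permutation matrices. Thus there exist weights $\lambda_k \ge 0$ with $\sum_k \lambda_k = 1$ and permutations $\pi_k$ of $\{1,\dots,N\}$ such that $C = \sum_k \lambda_k P_{\pi_k}$, and hence
\begin{equation}
\sum_{n,m} c_{nm} a_n \ln b_m = \sum_k \lambda_k \sum_n a_n \ln b_{\pi_k(n)}.
\end{equation}

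It therefore suffices to prove that $\sum_n a_n \ln b_{\pi(n)} \le \sum_n a_n \ln b_n$ for every permutation $\pi$. Since both sequences $(a_n)$ and $(\ln b_n)$ are non-decreasing in $n$ (the latter because $b_n$ is non-decreasing), this is the Hardy--Littlewood--Pólya rearrangement inequality. A direct argument works: if $\pi$ is not the identity, pick indices $n < m$ with $\pi(n) > \pi(m)$, and swap the images at $n$ and $m$. The sum changes by
\begin{equation}
(a_m - a_n)\bigl(\ln b_{\pi(n)} - \ln b_{\pi(m)}\bigr) \ge 0,
\end{equation}
so it does not decrease. Finitely many such swaps bring $\pi$ to the identity without ever decreasing the sum, which establishes the per-permutation bound. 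Averaging it with the Birkhoff weights $\lambda_k$ then yields the lemma.

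The main subtlety (rather than a genuine obstacle) is the case of vanishing $b_m$. With the standard convention $0\cdot\ln 0 = 0$ and $a\cdot\ln 0 = -\infty$ for $a>0$, both sides of the inequality retain their natural order; in the application to the preceding proof, the $b_m$ correspond to eigenvalues of a density matrix and any zero entries sit at the bottom of the sorted list, so the support condition implicit in the relative entropy ensures every term is well defined and the swap argument goes through unchanged.
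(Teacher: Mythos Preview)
Your proof is correct and takes a genuinely different route from the paper. The paper does not invoke Birkhoff--von Neumann; instead it argues by induction on $N$ via a direct mass-shifting procedure on $C$. At each stage it locates indices $i,j<k$ with $c_{ik}>0$ and $c_{kj}>0$, moves mass $\epsilon=\min(c_{ik},c_{kj})$ around the $2\times 2$ cycle $(i,j),(i,k),(k,j),(k,k)$ to raise $c_{kk}$, and uses exactly the same two-point inequality $(a_k-a_i)(\ln b_k-\ln b_j)\ge 0$ that drives your swap argument to show the functional does not decrease. Iterating until $c_{kk}=1$ decouples the last row and column and reduces to the $(k-1)\times(k-1)$ case. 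In effect the paper re-derives, in this special setting, the extremal structure that Birkhoff--von Neumann supplies for you in one stroke. Your argument is shorter and places the result in a well-known framework (linear functional over the Birkhoff polytope, maximized at a permutation, then the Hardy--Littlewood--P\'olya rearrangement); the paper's argument is more self-contained, requiring no external theorems beyond the elementary pairwise inequality. Both share the same combinatorial core.
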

\begin{proof}
We prove by induction. The $N=1$ case is trivial.
Assuming that the result holds for $N=k-1$ with $k\ge 2$, we show it also holds for $N=k$.
For convenience, we define $\mca{F}(C)\coloneqq\sum_{n=1}^k\sum_{m=1}^kc_{nm}a_n\ln b_m$ as the function associated with the matrix $C$.
We need only prove $\mca{F}(C)\le\sum_{n=1}^ka_n\ln b_n$.
If $c_{kk}<1$, then there exist two indexes $1\le i,j\le k-1$ such that $c_{ik}>0$ and $c_{kj}>0$.
We set $\epsilon=\min(c_{ik},c_{kj})$ and note that
\begin{equation}
a_{i}\ln b_{j} + a_k\ln b_k\ge a_{i}\ln b_k + a_k\ln b_{j}~[\because (a_k-a_{i})(\ln b_k-\ln b_{j})\ge 0].
\end{equation}
We define a new matrix $C'=[c_{nm}']\in\mbb{R}^{k\times k}$ as 
\begin{equation}
c_{kk}'=c_{kk}+\epsilon,~c_{ij}'=c_{ij}+\epsilon,~c_{ik}'=c_{ik}-\epsilon,~c_{kj}'=c_{kj}-\epsilon,
\end{equation}
and $c_{nm}'=c_{nm}$ otherwise.
All elements of $C'$ are nonnegative and satisfy $\sum_mc_{nm}'=\sum_nc_{nm}'=1$.
This implies that this procedure generates a new doubly stochastic matrix that yields a larger value of the function $\mca{F}$, i.e., $\sum_{n,m}c_{nm}a_n\ln b_m=\mca{F}(C)\le\mca{F}(C')=\sum_{n,m}c_{nm}'a_n\ln b_m$.
Repeating this procedure a finite number of times, we eventually obtain a doubly stochastic matrix $C_{\rm fin}$ with $c_{kk}'=1$.
Then, $c_{kn}'=c_{nk}'=0$ for all $n=1,\dots,k-1$ and the submatrix $C_{\rm sub}=[c_{nm}']\in\mbb{R}^{(k-1)\times(k-1)}$, which is obtained by eliminating the $k$th row and the $k$th column of $C_{\rm fin}$, is also a doubly stochastic matrix; therefore, $\mca{F}(C_{\rm sub})\le\sum_{n=1}^{k-1}a_n\ln b_n$.
Consequently,
\begin{equation}
\mca{F}(C)\le\mca{F}(C_{\rm fin})=\mca{F}(C_{\rm sub})+a_k\ln b_k\le\sum_{n=1}^ka_n\ln b_n,
\end{equation}
which completes our proof.
\end{proof}

According to Lemma 1, we have $\sum_{n,m}c_{nm}a_n\ln b_m\le\sum_n a_n\ln b_n$.
Consequently, Eq.~\eqref{eq:rel.decomp} implies
\begin{equation}
S(V\rho V^\dagger||\sigma)\ge \sum_n a_n\ln a_n-\sum_n a_n\ln b_n = D(a_n||b_n).\label{eq:proof.tmp1}
\end{equation}
Equation~\eqref{eq:proof.tmp1} holds for an arbitrary unitary matrix $V$; therefore, we obtain $\min_{V^\dagger V=\mbb{I}}S(V\rho V^\dagger||\sigma)\ge D(a_n||b_n)$.

\section{Another derivation of Eq.~(\EigenBound) for a two-level atom system}
We consider the thermal relaxation process of a two-level atom, which is weakly coupled to a thermal reservoir.
The Hamiltonian of the system is $H=\omega\sigma_z/2$.
The time evolution of the density matrix obeys the Lindblad equation:
\begin{equation}
\pp_t{\rho}=-i[H,\rho]+\gamma\bar{n}(\omega)(\sigma_+\rho\sigma_--\frac{1}{2}\{\sigma_-\sigma_+,\rho\})+\gamma(\bar{n}(\omega)+1)(\sigma_-\rho\sigma_+-\frac{1}{2}\{\sigma_+\sigma_-,\rho\}),
\end{equation}
where $\sigma_{\pm}=(\sigma_x\pm i\sigma_y)/2$, $\gamma$ is a positive damping rate, and $\bar{n}(\omega)=(e^{\beta\omega}-1)^{-1}$ is the Planck distribution.
The density operator $\rho(t)$ during the relaxation process is analytically solvable and the irreversible entropy production can be explicitly evaluated as $\Sigma_\tau=S[\rho(0)||\pi]-S[\rho(\tau)||\pi]$, where $\tau$ denotes the process time.

The density matrix can be represented using the Bloch sphere
\begin{equation}
\rho(t)=\frac{1}{2}\bras{\mbb{I}+\mbm{r}(t)\cdot\mbm{\sigma}},
\end{equation}
where $\mbm{r}(t)\coloneqq [r_x(t),r_y(t),r_z(t)]^\top$ is the Bloch vector and $\mbm{\sigma}\coloneqq [\sigma_x,\sigma_y,\sigma_z]^\top$ denotes the vector of the Pauli matrices.
Note that $r(t)^2\coloneqq r_x(t)^2+r_y(t)^2+r_z(t)^2\le 1$.
The density matrix can be explicitly calculated as
\begin{align}
r_x(t)&=e^{-\overline{\gamma}\tau/2}\bras{r_x(0)\cos(\omega\tau)-r_y(0)\sin(\omega\tau)},\\
r_y(t)&=e^{-\overline{\gamma}\tau/2}\bras{r_x(0)\sin(\omega\tau)+r_y(0)\cos(\omega\tau)},\\
r_z(t)&=r_z(0)e^{-\overline{\gamma}\tau}+\tanh(\beta\omega/2)[e^{-\overline{\gamma}\tau}-1],
\end{align}
where $\overline{\gamma}\coloneqq \gamma\coth(\beta\omega/2)$.
The eigenvalues of $\rho(t)$ are $[1\pm r(t)]/2$; therefore, the von Neumman entropy $S(t)=-\Tr{\rho(t)\ln\rho(t)}$ can be expressed in terms of the magnitude of the Bloch vector as
\begin{equation}
S(t)=-\frac{1-r(t)}{2}\ln\frac{1-r(t)}{2}-\frac{1+r(t)}{2}\ln\frac{1+r(t)}{2}.
\end{equation}
The irreversible entropy production can be written as
\begin{subequations}
\begin{align}
\Sigma_\tau&=S(\tau)-S(0)+\beta\Tr{H[\rho(0)-\rho(\tau)]}\\
&=S(\tau)-S(0)+\beta\omega[r_z(0)-r_z(\tau)]/2.
\end{align}
\end{subequations}
In the following, we prove the inequality [Eq.~(\EigenBound)]:
\begin{equation}
\Sigma_\tau\ge D[p_n(0)||p_n(\tau)],\label{eq:info.bound}
\end{equation}
where $\{p_n(t)\}_{n=1}^N$ are increasing eigenvalues of $\rho(t)$.

Equation \eqref{eq:info.bound} is equivalent to
\begin{equation}
\beta\omega[r_z(0)-r_z(\tau)]\ge[r(\tau)-r(0)]\ln\frac{1+r(\tau)}{1-r(\tau)}.\label{eq:simp.bound}
\end{equation}
For convenience, we set $a\coloneqq e^{-\overline{\gamma}\tau}\in(0,1)$, $r_{\rm eq}\coloneqq \tanh(\beta\omega/2)\in(0,1)$, and $\kappa\coloneqq r_x(0)^2+r_y(0)^2\ge 0$.
Then, $r(0)=\sqrt{\kappa+r_z(0)^2}$ and $r(\tau)=\sqrt{a\kappa+[ar_z(0)+(a-1)r_{\rm eq}]^2}$.
To prove Eq.~\eqref{eq:simp.bound}, we divide into two cases of $r(\tau)\le r(0)$ and $r(\tau)>r(0)$. 

\underline{\underline{(a) $r(\tau)\le r(0)$:}} Setting $f(\kappa)\coloneqq [r(\tau)-r(0)]\ln\brab{\bras{1+r(\tau)}/\bras{1-r(\tau)}}$ as a function of $\kappa$, Eq.~\eqref{eq:simp.bound} can be rewritten as
\begin{equation}
f(\kappa)\le\beta\omega[r_z(0)-r_z(\tau)].\label{eq:simp.tmp1}
\end{equation}
Since $f(\kappa)\le 0$, we need only consider the $r_z(0)<r_z(\tau)$ case.
To prove Eq.~\eqref{eq:simp.tmp1}, we first prove that $f(\kappa)$ is a decreasing function.
Specifically, we show that $df(\kappa)/d\kappa\le 0$.
Taking the derivative of $f(\kappa)$ with respect to $\kappa$, we obtain
\begin{equation}
\frac{d}{d\kappa}f(\kappa)=\frac{ar(0)-r(\tau)}{r(0)}\ln\frac{1+r(\tau)}{1-r(\tau)}+\frac{2a[r(\tau)-r(0)]}{1-r(\tau)^2}.\label{eq:diff1}
\end{equation}
If $ar(0)-r(\tau)\le 0$, then $df(\kappa)/d\kappa\le 0$.
If $ar(0)-r(\tau)>0$, then we apply the inequality
\begin{equation}
\ln\frac{1+r(\tau)}{1-r(\tau)}\le\frac{2r(\tau)}{1-r(\tau)^2}
\end{equation}
to Eq.~\eqref{eq:diff1}, which gives
\begin{subequations}
\begin{align}
\frac{d}{d\kappa}f(\kappa)&\le\frac{2r(\tau)[ar(0)-r(\tau)]}{r(0)[1-r(\tau)^2]}+\frac{2a[r(\tau)-r(0)]}{1-r(\tau)^2}\\
&=\frac{2r(\tau)[ar(0)-r(\tau)]+2ar(0)[r(\tau)-r(0)]}{r(0)[1-r(\tau)^2]}.
\end{align}
\end{subequations}
Since $r(\tau)\le r(0)$ and $ar(0)-r(\tau)\le a[r(0)-r(\tau)]$, we can readily obtain $2r(\tau)[ar(0)-r(\tau)]+2ar(0)[r(\tau)-r(0)]\le 0$, or equivalently, $df(\kappa)/d\kappa\le 0$.

From $r_z(0)<ar_z(0)+(a-1)r_{\rm eq}=r_z(\tau)$, $r_z(0)+r_{\rm eq}<0$ is obtained.
In addition, from $r(0)^2=\kappa+r_z(0)^2\ge a\kappa+[ar_z(0)+(a-1)r_{\rm eq}]^2=r(\tau)^2$, we can derive
\begin{equation}
\kappa\ge\max\brab{0,[r_{\rm eq}+r_z(0)][(1-a)r_{\rm eq}-(1+a)r_z(0)]}=0.
\end{equation}
Since $f(\kappa)$ is a decreasing function, we have $f(\kappa)\le f(0)$, which is equivalent to
\begin{subequations}
\begin{align}
f(\kappa)&\le[|r_z(\tau)|-|r_z(0)|]\ln\frac{1+|r_z(\tau)|}{1-|r_z(\tau)|}\\
&=[r_z(0)-r_z(\tau)]\ln\frac{1+|r_z(\tau)|}{1-|r_z(\tau)|}.
\end{align}
\end{subequations}
Here, we have used $r_z(0)<0$ and $r_z(\tau)=ar_z(0)+(a-1)r_{\rm eq}\le 0$.
Since $|r_z(\tau)|=(1-a)r_{\rm eq}-ar_z(0)\ge r_{\rm eq}$ and $\ln[(1+x)/(1-x)]$ is an increasing function of $x\in[0,1]$, we further obtain
\begin{equation}
f(\kappa)\le [r_z(0)-r_z(\tau)]\ln\frac{1+r_{\rm eq}}{1-r_{\rm eq}}=\beta\omega[r_z(0)-r_z(\tau)],
\end{equation}
which is exactly Eq.~\eqref{eq:simp.tmp1}.

\underline{\underline{(b) $r(\tau)>r(0)$:}} The condition $r(\tau)>r(0)$ is equivalent to
\begin{equation}
0\le\kappa\le(r_z(0)+r_{\rm eq})[(1-a)r_{\rm eq}-(a+1)r_z(0)],\label{eq:kappa.cond}
\end{equation}
from which we can derive $r_z(0)+r_{\rm eq}\ge 0$.
Consequently, $r_z(0)-r_z(\tau)=(1-a)[r_z(0)+r_{\rm eq}]\ge 0$.
Therefore, Eq.~\eqref{eq:simp.bound} can be rewritten as
\begin{equation}
\beta\omega\ge\frac{r(\tau)-r(0)}{r_z(0)-r_z(\tau)}\ln\frac{1+r(\tau)}{1-r(\tau)}\eqqcolon g(a).\label{eq:simp.tmp2}
\end{equation}
To prove this, we first show that $g(a)$ is a decreasing function with respect to $a$.
Specifically, we prove that $g_1(a)\coloneqq \ln\brab{[1+r(\tau)]/[1-r(\tau)]}$ and $g_2(a)\coloneqq [r(\tau)-r(0)]/[r_z(0)-r_z(\tau)]$ are decreasing functions with respect to $a$.
Noting that
\begin{subequations}
\begin{align}
\frac{d}{da}r(\tau)&=\frac{\kappa+2[ar_z(0)+(a-1)r_{\rm eq}][r_z(0)+r_{\rm eq}]}{2r(\tau)}\\
&\le\frac{(a-1)[r_z(0)+r_{\rm eq}]^2}{2r(\tau)}\le 0,
\end{align}
\end{subequations}
the monotonicity of $g_1(a)$ can be verified as
\begin{equation}
\frac{d}{da}g_1(a)=\frac{2}{1-r(\tau)^2}\frac{dr(\tau)}{da}\le 0.
\end{equation}
Taking the derivative of $g_2(a)$ with respect to $a$, we have
\begin{equation}
\frac{d}{da}g_2(a)=\frac{\mfr{c}-2r(0)r(\tau)}{2(1-a)^2[r_{\rm eq}+r_z(0)] r(\tau)},
\end{equation}
where, $\mfr{c}\coloneqq (a+1)\kappa+2r_z(0)[ar_z(0)+(a-1)r_{\rm eq}]$.
If $\mfr{c}\le 2r(0)r(\tau)$, then $dg_2(a)/da\le 0$ is immediately obtained.
If $\mfr{c}> 2r(0)r(\tau)$, then we have
\begin{subequations}
\begin{align}
\frac{d}{da}g_2(a)&=\frac{\mfr{c}^2-4r(0)^2r(\tau)^2}{2(1-a)^2r(\tau)[\mfr{c}+2r(0)r(\tau)][r_{\rm eq}+r_z(0)]}\\
&=\frac{\kappa(\kappa-4r_{\rm eq}[r_{\rm eq}+r_z(0)])}{2r(\tau)[\mfr{c}+2r(0)r(\tau)][r_{\rm eq}+r_z(0)]}.
\end{align}
\end{subequations}
Since $\kappa \le [r_z(0)+r_{\rm eq}][(1-a)r_{\rm eq}-(a+1)r_z(0)]$ and $r_{\rm eq}+r_z(0)\ge 0$, we can easily derive that $\kappa\le 4r_{\rm eq}[r_{\rm eq}+r_z(0)]$.
Consequently, we obtain $dg_2(a)/da\le 0$.

Next, from Eq.~\eqref{eq:kappa.cond}, the possible range of $a$ can be obtained as
\begin{equation}
0\le a\le \frac{r_{\rm eq}-r_z(0)}{r_{\rm eq}+r_z(0)}-\frac{\kappa}{[r_{\rm eq}+r_z(0)]^2}.
\end{equation}
Since $g(a)$ is a decreasing function, we obtain $g(a)\le g(0)$, which is equivalent to
\begin{subequations}
\begin{align}
\frac{r(\tau)-r(0)}{r_z(0)-r_z(\tau)}\ln\frac{1+r(\tau)}{1-r(\tau)}&\le\frac{r_{\rm eq}-r(0)}{r_z(0)+r_{\rm eq}}\ln\frac{1+r_{\rm eq}}{1-r_{\rm eq}}\\
&\le\ln\frac{1+r_{\rm eq}}{1-r_{\rm eq}}=\beta\omega.
\end{align}
\end{subequations}
Here, we have used $r_{\rm eq}-r(0)\le r_z(0)+r_{\rm eq}$.

Numerical verification of the bound is shown in Fig.~\ref{fig:num.result.two.lv.atom}.
\begin{figure}[t]
\centering
\includegraphics[width=0.8\linewidth]{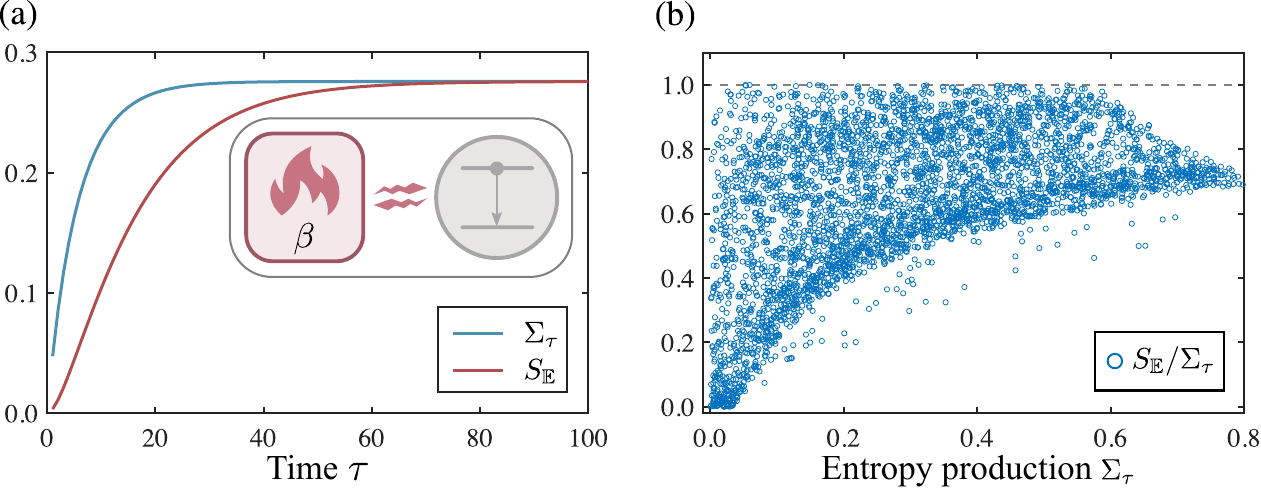}
\protect\caption{Numerical illustration of Eq.~(\EigenBound) in the two-level atom. (a) The irreversible entropy production $\Sigma_\tau$ and the lower-bound $S_{\mbb{E}}$ are plotted as functions of time $\tau$. Parameters are fixed as $\omega=0.5$, $\beta=0.5$, $\gamma=0.01$, and the initial density matrix is $\rho(0)=(\mbb{I}-0.8\sigma_z)/2$. (b) Random verification of the bound. The dashed line depicts the upper-bound of the ratio $S_{\mbb{E}}/\Sigma_\tau$. Each circle represents the ratio $S_{\mbb{E}}/\Sigma_\tau$ calculated with a random initial density matrix and $\tau\in[10^0,10^3]$.}\label{fig:num.result.two.lv.atom}
\end{figure}

%